\documentclass[conference]{IEEEtran}
\IEEEoverridecommandlockouts
\usepackage{cite}
\usepackage{amsmath,amssymb}
\usepackage{graphicx}
\usepackage{booktabs}
\usepackage{algorithm}
\usepackage{algpseudocode}
\usepackage{glossaries}
\usepackage{balance}

\newtheorem{theorem}{Theorem}
\newtheorem{lemma}{Lemma}
\newtheorem{proposition}{Proposition}
\DeclareMathOperator{\Tr}{Tr}

\newacronym{5g}{5G}{fifth-generation wireless technology}
\newacronym{6g}{6G}{sixth-generation wireless technology}

\newacronym{oma}{OMA}{orthogonal multiple access}
\newacronym{noma}{NOMA}{non-orthogonal multiple access}
\newacronym{pd-noma}{PD-NOMA}{power-domain non-orthogonal multiple access}
\newacronym{rsma}{RSMA}{rate-splitting multiple access}
\newacronym{fdma}{FDMA}{frequency division multiple access}
\newacronym{tdma}{TDMA}{time division multiple access}
\newacronym{cdma}{CDMA}{code division multiple access}
\newacronym{sdma}{SDMA}{space division multiple access}
\newacronym{ofdma}{OFDMA}{orthogonal frequency division multiple access}
\newacronym{sc}{SC}{superposition coding}
\newacronym{sic}{SIC}{successive interference cancellation}

\newacronym{ris}{RIS}{reconfigurable intelligent surface}
\newacronym{irs}{IRS}{intelligent reflecting surface}
\newacronym{d-ris}{D-RIS}{diagonal reconfigurable intelligent surface}
\newacronym{bd-ris}{BD-RIS}{beyond-diagonal reconfigurable intelligent surface}
\newacronym{star-ris}{STAR-RIS}{simultaneously transmitting and reflecting
  reconfigurable intelligent surface}

\newacronym{backcom}{BackCom}{backscatter communication}
\newacronym{ambc}{AmBC}{ambient backscatter communication}
\newacronym{abcs}{ABCS}{ambient backscatter communication systems}
\newacronym{bd}{BD}{backscatter device}
\newacronym{br}{BR}{backscatter receiver}
\newacronym{sr}{SR}{symbiotic radio}

\newacronym{bs}{BS}{base station}
\newacronym{pt}{PT}{primary transmitter}
\newacronym{pr}{PR}{primary receiver}
\newacronym{cr}{CR}{cognitive radio}
\newacronym{iot}{IoT}{Internet of Things}

\newacronym{siso}{SISO}{single-input single-output}
\newacronym{miso}{MISO}{multiple-input single-output}
\newacronym{mimo}{MIMO}{multiple-input multiple-output}
\newacronym{tas}{TAS}{transmit antenna selection}
\newacronym{bf}{BF}{beamforming}

\newacronym{csi}{CSI}{channel state information}
\newacronym{snr}{SNR}{signal-to-noise ratio}
\newacronym{sinr}{SINR}{signal-to-interference-plus-noise ratio}
\newacronym{awgn}{AWGN}{additive white Gaussian noise}
\newacronym{bpsk}{BPSK}{binary phase shift keying}

\newacronym{se}{SE}{spectral efficiency}
\newacronym{ec}{EC}{ergodic capacity}
\newacronym{ber}{BER}{bit error rate}
\newacronym{op}{OP}{outage probability}
\newacronym{cop}{COP}{coexistence outage probability}
\newacronym{wsr}{WSR}{weighted sum rate}
\newacronym{qos}{QoS}{quality of service}
\newacronym{ep}{EP}{enhancing primary signal}

\newacronym{ao}{AO}{alternating optimisation}
\newacronym{sca}{SCA}{successive convex approximation}
\newacronym{sdr}{SDR}{semidefinite relaxation}
\newacronym{sdp}{SDP}{semidefinite program}
\newacronym{srocr}{SROCR}{sequential rank-one constraint relaxation}
\newacronym{pac}{PAC}{power allocation coefficient}
\newacronym{np}{NP}{nondeterministic polynomial time}

\begin{document}

\title{Beamforming Design for RIS-Assisted Backscatter NOMA in
Underlay Cognitive Radio}
\author{%
\IEEEauthorblockN{Kgomotjo Seipobi\IEEEauthorrefmark{1},
Fambirai Takawira\IEEEauthorrefmark{2} and
Chabalala Chabalala\IEEEauthorrefmark{3}}
\IEEEauthorblockA{\textit{School of Electrical and Information Engineering}\\
\textit{University of the Witwatersrand}, Johannesburg, South Africa\\
\IEEEauthorrefmark{1}2919396@students.wits.ac.za,
\IEEEauthorrefmark{2}fambirai.takawira@wits.ac.za,
\IEEEauthorrefmark{3}chabalala.chabalala@wits.ac.za}}

\maketitle

\begin{abstract}
This paper investigates a reconfigurable intelligent surface (RIS)-assisted
backscatter non-orthogonal multiple access (NOMA) network operating as the
secondary system of an underlay cognitive radio network. A multi-antenna base
station serves several user clusters through the RIS, each containing a
backscatter device (BD). We maximise the weighted sum rate of the secondary network by
jointly optimising the transmit beamformers and the RIS reflection matrix,
subject to the power budget, the interference-temperature limit and minimum
rate requirements. The non-convex problem is solved by alternating
optimisation, using semidefinite relaxation and sequential rank-one
constraint relaxation. Interference-normalised
user ordering makes the successive interference cancellation constraints
redundant. A closed-form equal-rate power split is derived; it
agrees closely with simulation and costs about 4.5\,\% of the sum rate.
Supporting the BDs reduces the NOMA sum rate by 8.7\,\% relative to a no-BD
design, while a larger BD reflection coefficient raises the backscatter rate
with little NOMA loss.
\end{abstract}

\begin{IEEEkeywords}
Backscatter communication, cognitive radio, non-orthogonal multiple
access, reconfigurable intelligent surface, weighted sum rate.
\end{IEEEkeywords}

\glsresetall

\section{Introduction}
\label{sec:intro}

The proliferation of the Internet of Things has intensified the demand
for spectrum- and energy-efficient wireless access~%
\cite{zhang_spectrum_2019}. Several complementary technologies have
emerged to address this demand. \Glspl{ris} can enhance connectivity
when direct links are obstructed by reconfiguring the propagation
environment through low-cost passive elements without requiring
additional radio-frequency chains~\cite{di_renzo_smart_2020}.
\Gls{noma} enables multiple users to share the same resource block
through power-domain multiplexing and \gls{sic}, thereby improving
spectral utilisation and user connectivity~%
\cite{Aldababsa_tutorial_noma,Liu_noma}. Backscatter communication
allows a battery-free device to transmit by modulating an incident
waveform~%
\cite{jiang_backscatter_2023,basharat_reconfigurable_2022,
asif_energy-efficient_2023}; when this waveform is supplied by a host
network, the resulting interaction forms a symbiotic radio system~%
\cite{liang_symbiotic_2020}. Underlay \gls{cr} further permits a
secondary network to reuse licensed spectrum, provided that the
interference caused at the primary receiver remains below a prescribed
threshold~\cite{Zhou_Artificial_cr}.

Joint active and passive beamforming is well established for
conventional \gls{ris}-assisted \gls{noma}~%
\cite{Mu_Exploiting,zuo_joint_2023,gao_joint_2023}, but these designs contain
neither \glspl{bd} nor a primary network.
\Gls{ris}-assisted \gls{noma}-enabled underlay ambient backscatter was
investigated in~\cite{thenua_sum_2024}, where multiple \glspl{bd} use the
primary waveform to communicate with a dedicated backscatter receiver.
However, the secondary network is restricted to backscatter communication: it
contains no secondary \gls{bs} or conventional \gls{noma}-user service.
Accordingly, only the aggregate \gls{bd} rate is maximised, without active
transmit beamforming.
Cognitive backscatter \gls{noma} designs employing active \glspl{ris} as
secondary transmitters~\cite{han_aris_cr_noma} likewise differ from the
physically separate-\gls{bd} symbiotic setting considered here.
A symbiotic architecture integrating physically separate \glspl{bd}
with a \gls{bs}-based \gls{noma} service was considered in~%
\cite{Asif_noma_zeRIS}. Nevertheless, that work assumes a standalone network
without a licensed primary system, so no interference-temperature constraint
arises. Moreover, fairness is
enforced there only through minimum \gls{sinr} constraints, so the
equal-rate intra-cluster power split is not characterised. Similarly,~\cite{sun_weighted_2024}
maximises the weighted sum rate of an
RIS-backscatter-aided NOMA network, but the RIS itself serves as the
information-bearing backscatter device. The same architecture is carried into an uplink underlay CR-NOMA network in~%
\cite{sun_sib_ris_2026}, which jointly optimises the primary-transmitter
beamformer, the reflection of a self-sustainable information-bearing RIS and
its energy-harvesting power split; there, too, the RIS carries the
information by modulating its own reflection. In
contrast, the present work considers a downlink in which a passive RIS assists
physically separate cluster-associated BDs within an underlay cognitive-radio network,
where the secondary \gls{bs} beamformers and the RIS are designed jointly and
must additionally satisfy the interference-temperature constraint protecting
the primary receiver. To the best of our knowledge, extending \gls{ris}-assisted \gls{noma}
beamforming to separate \glspl{bd} in an underlay \gls{cr} network has not been
reported. The equal-rate operating point is characterised
in~\cite{Agarwal_ndris_fairness}, but for a single-antenna access point
serving two-user pairs in orthogonal time slots without \glspl{bd}; under
multi-cluster \gls{miso} transmission it also depends on the beamformers and
the \gls{bd} interference, and neither work quantifies its throughput cost.

Motivated by these limitations, we consider a \gls{ris}-assisted
symbiotic \gls{miso}-\gls{noma} network operating as the secondary
system of an underlay \gls{cr} architecture. The conventional
\gls{noma} users and the physically separate \glspl{bd} convey
independent information over the same secondary waveform. Accordingly,
the \gls{bs} active beamformers and the \gls{ris} passive beamforming
coefficients are jointly optimised to maximise the weighted sum of the
\gls{noma}-user and \gls{bd} rates, subject to their respective
\gls{qos} requirements and the interference-temperature constraint
protecting the primary receiver. The resulting weighted-sum-rate
maximisation problem is highly non-convex because of the coupling between
the active and passive beamforming variables and the \gls{ris}
unit-modulus constraints. The main contributions are summarised as follows.
\begin{itemize}
\item We develop an \gls{ris}-assisted symbiotic \gls{miso}-\gls{noma} model
for an underlay \gls{cr} network in which cluster-associated \glspl{bd}
modulate their information onto the incident secondary waveform, so that a
single \gls{ris} configuration jointly shapes the \gls{noma} and the
backscatter links. Each \gls{bd} symbol is decoded last within its cluster,
at the user that has cancelled the cluster's \gls{noma} symbols and can reuse
them as a data-aided reference.

\item We formulate the joint active and passive beamforming problem for
weighted-sum-rate maximisation under the interference-temperature limit,
the transmit power budget, and the minimum rates of both the \gls{noma} users and
the \glspl{bd}. We show that the redundancy of the \gls{sic} constraints under
an interference-normalised ordering extends to the present setting, in which
the interference floor also carries the backscatter and cross-network terms,
so that those constraints can be removed from the problem without loss.

\item We propose an \gls{ao} algorithm that updates the \gls{bs} beamformers
by \gls{sca} and \gls{sdr}, whose rank-one tightness follows the argument
of~\cite{xu_resource_allocation_irs}, and the \gls{ris} phase shifts by
\gls{srocr}, which handles the unit-modulus constraint. Numerical results establish convergence within about $12$ iterations,
validate the derived closed-form equal-rate power split against the simulated
rate crossover, and show that operating at the equal-rate split costs about
$4.5\,\%$ of the sum rate. They further show that supporting the \glspl{bd}
reduces the \gls{noma} sum rate by $8.7\,\%$ relative to a no-\gls{bd} design, while increasing the
backscatter reflection coefficient enhances the backscatter rate with little
further \gls{noma}-rate loss.
\end{itemize}

\section{System Model and Problem Formulation}
\label{sec:system}

\begin{figure}[t]
    \centering
    \includegraphics[width=\columnwidth]{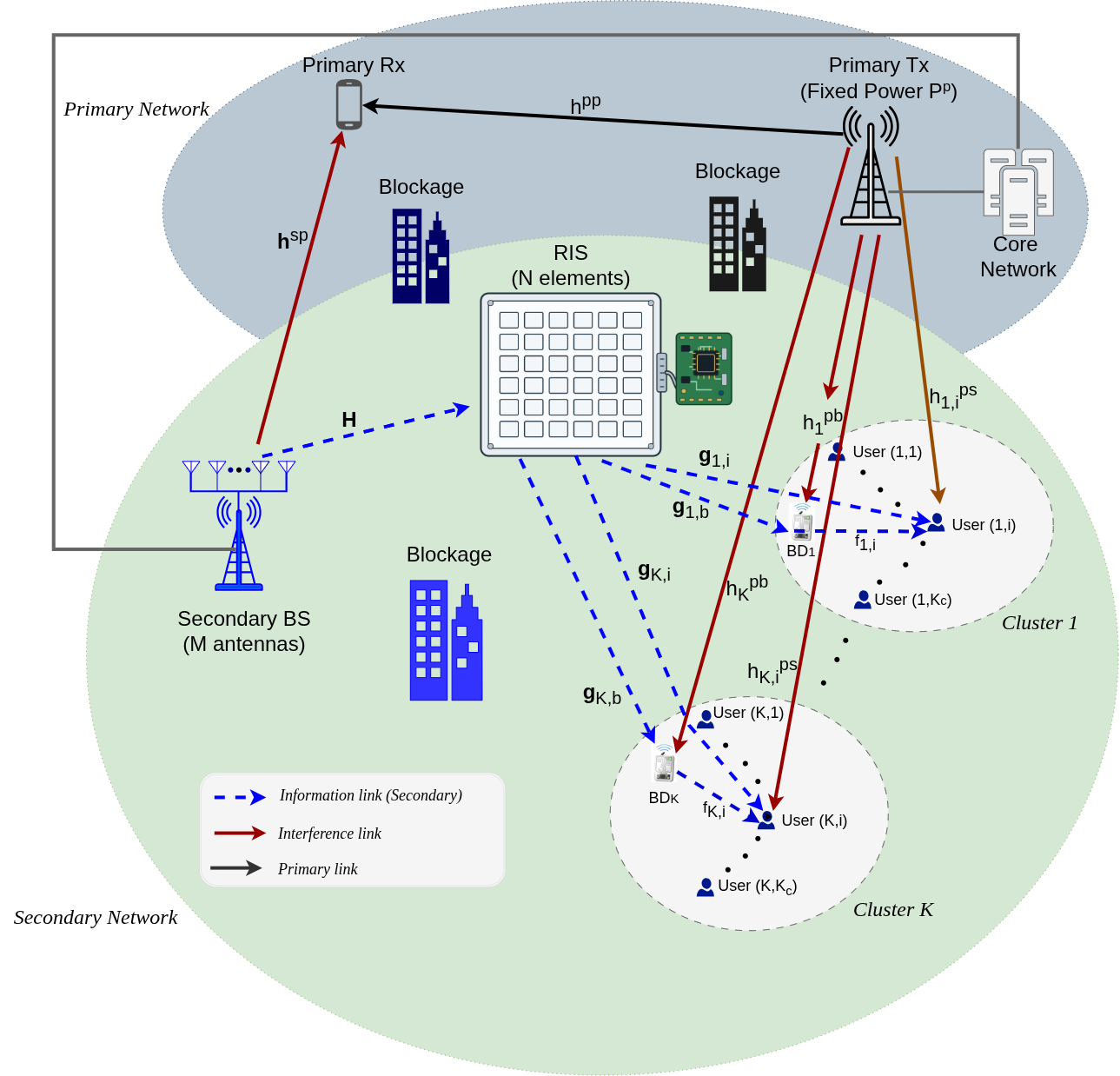}
    \caption{RIS-assisted symbiotic-backscatter MISO-NOMA network operating as the secondary of an underlay cognitive-radio system.}
    \label{fig:system_model}
\end{figure}

As shown in Fig.~\ref{fig:system_model}, we consider the downlink of an
\gls{ris}-assisted backscatter \gls{noma} network operating as the secondary of
an underlay \gls{cr} system. The primary network consists of a single-antenna
\gls{pt} transmitting at fixed power $P^{\mathrm{p}}$ and a single-antenna
\gls{pr}, whose operation is independent of the secondary design. The secondary
network comprises an $M$-antenna \gls{bs}, a passive \gls{ris} with $N$
reflecting elements, and $L=KK_c$ single-antenna users partitioned into $K$
clusters of $K_c$ users each, indexed by $k\in\mathcal{K}\triangleq\{1,\dots,K\}$
and $i\in\mathcal{K}_c\triangleq\{1,\dots,K_c\}$. Each cluster contains one
single-antenna \gls{bd} and is served by the beamforming vector
$\mathbf{w}_k\in\mathbb{C}^{M\times1}$. The
direct \gls{bs}--user and \gls{bs}--\gls{bd} links are blocked, so the
secondary transmission reaches its receivers only through the \gls{ris}, whose
configuration jointly shapes the \gls{noma} and backscatter channels.
The deployment is configured such that the \gls{ris} is shadowed from
the primary network. Consequently, no \gls{ris}-assisted cross-network
paths are considered: primary interference reaches the secondary receivers
over the direct \gls{pt}--secondary links, while secondary leakage reaches
the \gls{pr} over the direct \gls{bs}--\gls{pr} link. Within cluster $k$,
\gls{bd}$_k$ is illuminated by both the \gls{ris}-reflected secondary
waveform and the directly received primary signal, and modulates its own
information symbol $c_k$ onto the resulting incident field. The fraction
of incident power it re-radiates is set by the \gls{bd} reflection coefficient
$\eta_k\in[0,1]$, which governs the strength of the backscatter link and, as
shown below, the interference it imposes on the \gls{noma} users of the
cluster. Perfect \gls{csi} is assumed at the \gls{bs}.\footnote{The direct
cross-network \gls{csi} is obtained through the core network coordinating the
two tiers~\cite{Hui_robust_bf_cr}, and the cascaded \gls{ris}--\gls{bd} and
\gls{bd}--user coefficients using the backscatter channel-estimation
procedure of~\cite{liu_ambient_2013}.}

Denote by $\mathbf{H}\in\mathbb{C}^{N\times M}$,
$\mathbf{g}_{k,i}\in\mathbb{C}^{N\times1}$,
$\mathbf{g}_{k,b}\in\mathbb{C}^{N\times1}$, and
$f_{k,i}\in\mathbb{C}$ the \gls{bs}--\gls{ris},
\gls{ris}--user $(k,i)$, \gls{ris}--\gls{bd}$_k$, and
\gls{bd}$_k$--user $(k,i)$ channels, respectively. The direct
\gls{pt}--user, \gls{pt}--\gls{bd}$_k$, \gls{bs}--\gls{pr}, and
\gls{pt}--\gls{pr} channels are denoted by
$h^{ps}_{k,i}\in\mathbb{C}$, $h^{pb}_k\in\mathbb{C}$,
$\mathbf{h}^{sp}\in\mathbb{C}^{M\times1}$, and
$h^{pp}\in\mathbb{C}$, respectively. The \gls{ris}-related channels are
modelled as Rician fading, while the remaining channels follow Rayleigh
fading.

With
$\boldsymbol{\Theta}
=\mathrm{diag}(e^{j\theta_1},\dots,e^{j\theta_N})$,
where $\theta_n\in[0,2\pi)$, the effective \gls{bs}--user and cascaded
\gls{bs}--\gls{bd}$_k$--user channels for user $(k,i)$ are
\begin{equation}
\mathbf{h}_{k,i}^{H}
\triangleq
\mathbf{g}_{k,i}^{H}\boldsymbol{\Theta}\mathbf{H},
\qquad
\mathbf{h}_{k,i}^{c,H}
\triangleq
f_{k,i}\mathbf{g}_{k,b}^{H}\boldsymbol{\Theta}\mathbf{H}.
\label{eq:cascaded}
\end{equation}
Thus, $\boldsymbol{\Theta}$ affects the secondary links only through
\eqref{eq:cascaded}, while the direct cross-network interference links are
independent of the \gls{ris} configuration.

Employing superposition coding, the \gls{bs} transmits
$\mathbf{s}=\sum_{k=1}^{K}\mathbf{w}_k\sum_{i=1}^{K_c}
\sqrt{\alpha_{k,i}}\,s_{k,i}$, where $\mathbb{E}\{|s_{k,i}|^2\}=1$ and
the power allocation coefficients satisfy
$\sum_{i=1}^{K_c}\alpha_{k,i}=1$, while the \gls{pt} transmits
$x_p=\sqrt{P^{\mathrm{p}}}s_p$ with $\mathbb{E}\{|s_p|^2\}=1$. As a
passive scatterer, \gls{bd}$_k$ modulates its information symbol $c_k$,
with $\mathbb{E}\{|c_k|^2\}=1$, onto the field impinging on it,
comprising the \gls{ris}-reflected secondary signal and the directly
received primary signal. Its transmitted signal is
\begin{equation}
x^{b}_{k}=\sqrt{\eta_k}\,c_k\big(
\mathbf{g}_{k,b}^{H}\boldsymbol{\Theta}\mathbf{H}\,\mathbf{s}
+\sqrt{P^{\mathrm{p}}}\,h^{pb}_k\,s_p\big).
\label{eq:tx_bd}
\end{equation} The received
signal at user $(k,i)$ is then
\begin{equation}
\begin{aligned}
y_{k,i}=\;
&\underbrace{\sqrt{\alpha_{k,i}}\,\mathbf{h}_{k,i}^{H}\mathbf{w}_k\,s_{k,i}}_{\text{desired}}
+\underbrace{\mathbf{h}_{k,i}^{H}\mathbf{w}_k\!\!\sum_{m\neq i}^{K_c}\!\!
\sqrt{\alpha_{k,m}}\,s_{k,m}}_{\text{intra-cluster}}\\
&+\underbrace{\mathbf{h}_{k,i}^{H}\sum_{j\neq k}\mathbf{w}_j\sum_{m=1}^{K_c}
\sqrt{\alpha_{j,m}}\,s_{j,m}}_{\text{inter-cluster}}
+\underbrace{\sqrt{\eta_k}\,c_k\,\mathbf{h}_{k,i}^{c,H}\,\mathbf{s}}_{\text{backscatter}}\\
&+\underbrace{\sqrt{\eta_k P^{\mathrm{p}}}\,f_{k,i}\,h^{pb}_k\,c_k\,s_p
+\sqrt{P^{\mathrm{p}}}\,h^{ps}_{k,i}\,s_p}_{\text{backscattered and direct primary interference}}
+\;n_{k,i},
\end{aligned}
\label{eq:rx_sec}
\end{equation}
where $n_{k,i}\sim\mathcal{CN}(0,\sigma^2)$. The direct and backscattered
primary terms both contain the unknown primary symbol $s_p$ and are therefore
treated as interference. The backscattered secondary component carrying $c_k$
is desired only at the user decoding the \gls{bd} symbol and acts as
interference elsewhere. Backscatter from \gls{bd}$_j$, $j\neq k$, is assumed
negligible owing to the doubly faded path and is absorbed into the effective
noise floor. The received
signal at the \gls{pr} is
\begin{equation}
y_p=\sqrt{P^{\mathrm{p}}}h^{pp}s_p
+\mathbf{h}^{sp,H}\sum_{k=1}^{K}\mathbf{w}_k
\sum_{i=1}^{K_c}\sqrt{\alpha_{k,i}}\,s_{k,i}+n_p,
\label{eq:rx_pri}
\end{equation}
with $n_p\sim\mathcal{CN}(0,\sigma_p^2)$.

Let $C^{p}_{k,i}$ denote the primary interference collected at user
$(k,i)$ and $\Omega_{k,i}$ the residual interference-plus-noise floor it
sees in every decoding event,
\begin{align}
C^{p}_{k,i}&\triangleq
P^{\mathrm{p}}\big|h^{ps}_{k,i}\big|^{2}
+\eta_k P^{\mathrm{p}}\big|f_{k,i}\big|^{2}\big|h^{pb}_k\big|^{2},
\label{eq:cp}\\
\Omega_{k,i}&\triangleq
\sum_{j\neq k}\big|\mathbf{h}_{k,i}^{H}\mathbf{w}_j\big|^{2}
+\eta_k\sum_{j=1}^{K}\big|\mathbf{h}_{k,i}^{c,H}\mathbf{w}_j\big|^{2}
+C^{p}_{k,i}+\sigma^{2},
\label{eq:floor}\\
\Omega^{c}_{k}&\triangleq
\sum_{j\neq k}\big|\mathbf{h}_{k,i^{\star}_k}^{H}\mathbf{w}_j\big|^{2}
+\eta_k\!\sum_{j\neq k}\!\big|\mathbf{h}_{k,i^{\star}_k}^{c,H}\mathbf{w}_j\big|^{2}
+C^{p}_{k,i^{\star}_k}+\sigma^{2},
\label{eq:floor_bd}
\end{align}
where $i^{\star}_k$ denotes the user of cluster $k$ that decodes the \gls{bd}
symbol, identified below once the decoding order has been introduced. Both
terms of $C^{p}_{k,i}$ are independent of $\{\mathbf{w}_k\}$ and
$\boldsymbol{\Theta}$ but scale with $\eta_k$, so a larger reflection
coefficient raises the primary interference seen by every user of cluster $k$.
User $(k,i)$ decodes the symbols of cluster $k$ in the order
$\pi_k$. When it decodes the symbol of user $(k,m)$ with
$\pi_k(m)\le\pi_k(i)$, only the symbols with $\pi_k(m')>\pi_k(m)$ remain, so
the \gls{sinr} at user $(k,i)$ when decoding the symbol of user $(k,m)$ is
\begin{equation}
\gamma_{k,i\to m}=
\frac{\alpha_{k,m}\big|\mathbf{h}_{k,i}^{H}\mathbf{w}_k\big|^{2}}
{\big|\mathbf{h}_{k,i}^{H}\mathbf{w}_k\big|^{2}\,\bar\alpha_{k,m}
+\Omega_{k,i}},
\label{eq:sinr_sic}
\end{equation}
where the arrow notation $i\to m$ reads as user $(k,i)$ decoding the symbol
intended for user $(k,m)$, and
$\bar\alpha_{k,m}\triangleq\sum_{m':\pi_k(m')>\pi_k(m)}\alpha_{k,m'}$ denotes
the residual intra-cluster power, that is, the power carried by the symbols
that remain undecoded when the symbol of user $(k,m)$ is detected. Setting
$m=i$ gives the decoding \gls{sinr} of user
$(k,i)$ for its own symbol, $\gamma_{k,i}\triangleq\gamma_{k,i\to i}$, and the
corresponding rate $R_{k,i}=\log_2(1+\gamma_{k,i})$. The user in the last
decoding position, $i^{\star}_k=\pi_k^{-1}(K_c)$, has cancelled every
\gls{noma} symbol of its cluster and reuses them as a data-aided reference to
detect $c_k$, so that the own-cluster component of the backscattered field is
known while the remaining clusters' symbols stay undecoded. The \gls{sinr} at
$i^{\star}_k$ for the backscatter symbol $c_k$ is
\begin{equation}
\gamma_{k,c}=
\frac{\eta_k\big|\mathbf{h}_{k,i^{\star}_k}^{c,H}\mathbf{w}_k\big|^{2}}
{\Omega^{c}_{k}},
\label{eq:sinr_bd}
\end{equation}
so that the achievable rate of \gls{bd}$_k$ is
$R_{k,c}=\log_2(1+\gamma_{k,c})$.\footnote{The backscattered primary term of
\eqref{eq:rx_sec} carries $c_k$ but reaches the decoding user over a channel
that is not known to it, and is therefore retained as interference within
$C^{p}_{k,i^{\star}_k}$ rather than combined coherently with the wanted
component. Equation~\eqref{eq:sinr_bd} is consequently a conservative estimate
of the achievable backscatter rate.} Let $\nu_{k,i}\geq 0$ and
$\nu_{k,c}\geq 0$ denote the weighting coefficients assigned to NOMA user
$(k,i)$ and \gls{bd}$_k$, respectively. The weighted sum rate of the secondary
network is
\begin{equation}
R^{w}_{\mathrm{sum}}
=\sum_{k=1}^{K}\left(
\sum_{i=1}^{K_c}\nu_{k,i}R_{k,i}
+\nu_{k,c}R_{k,c}
\right).
\label{eq:wsr}
\end{equation}

We hold $\boldsymbol{\alpha}$ and $\boldsymbol{\eta}$ fixed and jointly
optimise the active and passive beamformers; the effect of
$\boldsymbol{\eta}$ is characterised by direct sweep in
Section~\ref{subsec:eta}. The joint optimisation problem is formulated as
\begin{subequations}\label{eq:P1}
\begin{align}
(\mathrm{P1})\ \max_{\{\mathbf{w}_k\},\boldsymbol{\Theta}}\ \
& \sum_{k=1}^{K}\Big(\sum_{i=1}^{K_c}\nu_{k,i}R_{k,i}+\nu_{k,c}R_{k,c}\Big)
\label{eq:P1obj}\\
\text{s.t.}\ \
& \sum_{k=1}^{K}\big|\mathbf{h}^{sp,H}\mathbf{w}_k\big|^{2}\le I_{\mathrm{th}},
\label{eq:P1ith}\\
& \gamma_{k,i\to m}\ge\gamma_{k,m},\ \ \forall k,\ (i,m)\in\mathcal{S}_k,
\label{eq:P1sic}\\
& R_{k,i}\ge R^{\min}_{k,i},\quad\forall k,i,
\label{eq:P1qos}\\
& R_{k,c}\ge R^{\min}_{k,c},\quad\forall k,
\label{eq:P1bdqos}\\
& \sum_{k=1}^{K}\|\mathbf{w}_k\|^{2}\le P_{\max},
\label{eq:P1pow}\\
& \big|[\boldsymbol{\Theta}]_{n,n}\big|=1,\ \ n=1,\dots,N,
\label{eq:P1theta}
\end{align}
\end{subequations}
where $\mathcal{S}_k\triangleq\big\{(i,m):\pi_k(m)<\pi_k(i)\big\}$ denotes the
set of intra-cluster user pairs for which user $(k,i)$ is required to decode
the symbol intended for user $(k,m)$. Constraint~\eqref{eq:P1ith} imposes the
interference-temperature limit at the primary receiver, while
\eqref{eq:P1sic} ensures successful \gls{sic}.
Constraints~\eqref{eq:P1qos}--\eqref{eq:P1bdqos} specify the minimum \gls{qos}
requirements of the \gls{noma} users and backscatter streams, respectively,
whereas \eqref{eq:P1pow} and \eqref{eq:P1theta} enforce the secondary
transmit-power budget and the unit-modulus \gls{ris} constraint. Problem
$(\mathrm{P1})$ is non-convex because the achievable rates are non-concave in
$\{\mathbf{w}_k\}$ due to the interference terms, the active beamforming
variables $\{\mathbf{w}_k\}$ and the \gls{ris} reflection matrix
$\boldsymbol{\Theta}$ are coupled through the cascaded channels in
\eqref{eq:cascaded}, and the unit-modulus constraint in \eqref{eq:P1theta} is
itself non-convex.

\section{Solution of the Problem}
\label{sec:solution}

The decoding order within each cluster is settled first, since it fixes which
symbols appear in the interference terms of every rate expression. With the
order in hand, $(\mathrm{P1})$ separates into an active and a passive
beamforming subproblem, solved in turn; the order is checked again once both
have been updated.

\subsection{Interference-Normalised Channel Gain Based Decoding Order}
\label{subsec:order}

For cluster $k$ under given $\{\mathbf{w}_k\}$ and $\boldsymbol{\Theta}$, the
decoding order is defined by~\cite{cuiUnsupervisedMachineLearningBased2018}
\begin{equation}
\tilde{\gamma}_{k,\pi_k^{-1}(1)}\le\tilde{\gamma}_{k,\pi_k^{-1}(2)}
\le\dots\le\tilde{\gamma}_{k,\pi_k^{-1}(K_c)},
\label{eq:order_rule}
\end{equation}
where $\tilde{\gamma}_{k,i}$ is the interference-normalised equivalent
channel gain
\begin{equation}
\tilde{\gamma}_{k,i}=
\frac{\big|\mathbf{g}_{k,i}^{H}\boldsymbol{\Theta}\mathbf{H}\mathbf{w}_k\big|^{2}}
{\Omega_{k,i}},
\label{eq:gamma_tilde}
\end{equation}
so that $\pi_k$ follows by sorting $\{\tilde{\gamma}_{k,i}\}_{i=1}^{K_c}$
in ascending order. Note that the interference terms in
\eqref{eq:gamma_tilde} carry no power allocation coefficients: since
$\sum_{m=1}^{K_c}\alpha_{j,m}=1$, the inter-cluster and backscatter
contributions of cluster $j$ collapse to
$\sum_{m}|\mathbf{h}_{k,i}^{H}\mathbf{w}_j|^{2}\alpha_{j,m}
=|\mathbf{h}_{k,i}^{H}\mathbf{w}_j|^{2}$, while the intra-cluster
interference is carried by the common beamformer $\mathbf{w}_k$ and is
shared by every user of the cluster. The power allocation coefficients
therefore have no impact on the decoding order, which can be resolved
before the power is allocated.

\begin{lemma}
\label{lem:sic}
Let $i$ and $m$ belong to cluster $k$ with $\pi_k(i)>\pi_k(m)$. Then
$\gamma_{k,i\to m}\ge\gamma_{k,m}$ holds automatically whenever the order
satisfies \eqref{eq:order_rule}.
\end{lemma}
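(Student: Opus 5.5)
The plan is to compare the two SINR expressions directly, reducing the inequality to a comparison of the interference-normalised gains in \eqref{eq:gamma_tilde}. Write $a_i\triangleq|\mathbf{h}_{k,i}^{H}\mathbf{w}_k|^2$ and $a_m\triangleq|\mathbf{h}_{k,m}^{H}\mathbf{w}_k|^2$, and note from \eqref{eq:cascaded} that $\mathbf{h}_{k,i}^{H}\mathbf{w}_k=\mathbf{g}_{k,i}^{H}\boldsymbol{\Theta}\mathbf{H}\mathbf{w}_k$, so that $\tilde\gamma_{k,i}=a_i/\Omega_{k,i}$. The key structural fact to exploit is that in \eqref{eq:sinr_sic} the residual coefficient $\bar\alpha_{k,m}$ depends only on the symbol being decoded, namely that of user $(k,m)$, and not on which receiver decodes it. Hence $\gamma_{k,i\to m}$ and $\gamma_{k,m}=\gamma_{k,m\to m}$ share the same $\alpha_{k,m}$ and the same $\bar\alpha_{k,m}$.

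Next, divide the numerator and denominator of \eqref{eq:sinr_sic} by $\Omega_{k,i}$ to obtain
\begin{equation*}
\gamma_{k,i\to m}=\frac{\alpha_{k,m}\tilde\gamma_{k,i}}{\bar\alpha_{k,m}\tilde\gamma_{k,i}+1},
\qquad
\gamma_{k,m}=\frac{\alpha_{k,m}\tilde\gamma_{k,m}}{\bar\alpha_{k,m}\tilde\gamma_{k,m}+1}.
\end{equation*}
This step relies on $\Omega_{k,i}$ being exactly the floor in the denominator of \eqref{eq:gamma_tilde}. It is, because \eqref{eq:floor} is defined as the floor common to every decoding event at user $(k,i)$. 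The backscatter term $\eta_k\sum_j|\mathbf{h}^{c,H}_{k,i}\mathbf{w}_j|^2$ and the primary term $C^p_{k,i}$ are both already contained in it, so these extra terms require no separate treatment. This is the point where the setting-specific extension has to be checked, and I expect it to be the main (though mild) obstacle: one must confirm that both SINRs are governed by the same normalised quantity.

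Then the map $x\mapsto \alpha x/(\bar\alpha x+1)$ on $x\ge0$ has derivative $\alpha/(\bar\alpha x+1)^2\ge0$, so it is nondecreasing. Since $\pi_k(i)>\pi_k(m)$, the ordering \eqref{eq:order_rule} gives $\tilde\gamma_{k,i}\ge\tilde\gamma_{k,m}$, and monotonicity yields $\gamma_{k,i\to m}\ge\gamma_{k,m}$. The degenerate cases are immediate: if $\alpha_{k,m}=0$ both sides vanish, and if $\bar\alpha_{k,m}=0$ the map is linear and still nondecreasing. Finally, we would remark that the conclusion holds for every pair in $\mathcal{S}_k$, so \eqref{eq:P1sic} can be dropped from (P1).
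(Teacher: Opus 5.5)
Your proposal is correct and follows essentially the same route as the paper. Both reduce $\gamma_{k,i\to m}$ and $\gamma_{k,m}$ to one nondecreasing function of the interference-normalised gain, which is exact because $\Omega_{k,i}$ is the denominator of $\tilde\gamma_{k,i}$ and $\alpha_{k,m}$, $\bar\alpha_{k,m}$ depend only on the target symbol, and then invoke $\tilde\gamma_{k,i}\ge\tilde\gamma_{k,m}$ from \eqref{eq:order_rule}. Your choice to divide by $\Omega_{k,i}\ge\sigma^{2}>0$ rather than by $|\mathbf{h}_{k,i}^{H}\mathbf{w}_k|^{2}$, as the paper does, is a small improvement, since it avoids the undefined $1/\tilde\gamma_{k,i}$ when the effective gain vanishes.
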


\emph{Proof:} Dividing the numerator and denominator of \eqref{eq:sinr_sic} by
$|\mathbf{h}^{H}_{k,i}\mathbf{w}_k|^{2}$ gives
$\gamma_{k,i\to m}=\alpha_{k,m}\big/\big(\bar\alpha_{k,m}
+1/\tilde\gamma_{k,i}\big)$, the reduction being exact because the denominator
of $\tilde\gamma_{k,i}$ in \eqref{eq:gamma_tilde} is $\Omega_{k,i}$. For a
given target symbol $(k,m)$ the coefficients $\alpha_{k,m}$ and
$\bar\alpha_{k,m}$ are fixed, so the decoding user enters only through
$\tilde\gamma_{k,i}$ and $\gamma_{k,i\to m}$ increases with it. The claim then
follows from $\tilde\gamma_{k,i}\ge\tilde\gamma_{k,m}$, which
\eqref{eq:order_rule} supplies by construction. An analogous result is
established in~\cite{cuiUnsupervisedMachineLearningBased2018} by
cross-multiplication. $\hfill\blacksquare$

Lemma~\ref{lem:sic} indicates that the \gls{sic} constraints \eqref{eq:P1sic}
can be removed under the order \eqref{eq:order_rule}. Problem $(\mathrm{P1})$
under the given decoding order is thus rewritten as
\begin{subequations}\label{eq:P2}
\begin{align}
(\mathrm{P2})\ \max_{\{\mathbf{w}_k\},\boldsymbol{\Theta}}\ \
& \sum_{k=1}^{K}\Big(\sum_{i=1}^{K_c}\nu_{k,i}R_{k,i}+\nu_{k,c}R_{k,c}\Big)\\
\text{s.t.}\ \ & \eqref{eq:P1ith},\ \eqref{eq:P1qos},\ \eqref{eq:P1bdqos},\
\eqref{eq:P1pow},\ \eqref{eq:P1theta}.
\end{align}
\end{subequations}
Since $\tilde{\gamma}_{k,i}$ depends on $\{\mathbf{w}_k\}$ and
$\boldsymbol{\Theta}$, the order is re-evaluated from
\eqref{eq:order_rule} at the end of each iteration. In the following we solve
$(\mathrm{P2})$ instead of $(\mathrm{P1})$, decomposing it into the
active beamforming subproblem of Section~\ref{subsec:bf} and the passive
beamforming subproblem of Section~\ref{subsec:passive}. The backscatter
symbol $c_k$ of \gls{bd}$_k$ is decoded last, at
$i^{\star}_k=\pi_k^{-1}(K_c)$; its decodability is not governed by
Lemma~\ref{lem:sic} but is enforced by \eqref{eq:P1bdqos}.

\subsection{Intra-Cluster Power Allocation and the Equal-Rate Split}
\label{subsec:pac}

The power allocation coefficients $\boldsymbol{\alpha}$ are held fixed in
$(\mathrm{P2})$, so the split within a cluster is a design input rather than an
optimisation variable. We characterise that input in closed form for the
two-user case by identifying the split at which both users of a cluster achieve
equal rate; Section~\ref{subsec:pac_fairness} validates it numerically.

Consider $K_c=2$, and let $i$ and $m$ denote the users of cluster $k$ with
$\pi_k(m)>\pi_k(i)$, so that $m=i^{\star}_k$ is decoded last. Applying to
\eqref{eq:sinr_sic} the same reduction used in the proof of
Lemma~\ref{lem:sic}, the achievable rates become
\begin{align}
&R_{k,i}=\log_2\!\left(1+
\frac{\alpha_{k,i}\tilde{\gamma}_{k,i}}
     {\alpha_{k,m}\tilde{\gamma}_{k,i}+1}\right),
\label{eq:rate_i}\\
&R_{k,m}=\log_2\!\left(1+\alpha_{k,m}\tilde{\gamma}_{k,m}\right).
\label{eq:rate_m}
\end{align}
Here, user $m$ has cancelled the other intra-cluster symbol and therefore
decodes its own signal without intra-cluster interference, whereas user $i$
treats the power allocated to user $m$ as interference.

\begin{proposition}
\label{prop:crossover}
For a given channel realisation and fixed $\{\mathbf{w}_k\}$ and
$\boldsymbol{\Theta}$, the two users achieve equal rate at a unique power
split. The corresponding power allocation of the last-decoded user is
\begin{equation}
\alpha^{\star}_{k,m}=
\frac{-\big(\tilde{\gamma}_{k,m}+\tilde{\gamma}_{k,i}\big)
+\sqrt{\big(\tilde{\gamma}_{k,m}+\tilde{\gamma}_{k,i}\big)^{2}
+4\tilde{\gamma}_{k,m}\tilde{\gamma}_{k,i}^{2}}}
{2\tilde{\gamma}_{k,m}\tilde{\gamma}_{k,i}},
\label{eq:pac_root}
\end{equation}
where $\alpha^{\star}_{k,m}\in(0,1)$ and
$\alpha^{\star}_{k,i}=1-\alpha^{\star}_{k,m}$.
\end{proposition}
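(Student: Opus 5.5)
The plan is to equate \eqref{eq:rate_i} and \eqref{eq:rate_m} and reduce the condition to a scalar quadratic in $a\triangleq\alpha_{k,m}$. Two facts make this legitimate. First, since $\log_2(1+x)$ is strictly increasing, $R_{k,i}=R_{k,m}$ holds exactly when the two SINR arguments coincide. Second, as noted after \eqref{eq:gamma_tilde}, $\tilde\gamma_{k,i}$ and $\tilde\gamma_{k,m}$ do not depend on the power allocation coefficients. So for fixed $\{\mathbf{w}_k\}$ and $\boldsymbol{\Theta}$ they are constants, and the decoding order $\pi_k$, which produced \eqref{eq:rate_i}--\eqref{eq:rate_m}, does not change as the split varies. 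I will also assume $\tilde\gamma_{k,i},\tilde\gamma_{k,m}>0$, which holds for any nonzero effective channel and beamformer.

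Writing $\alpha_{k,i}=1-a$, I will show uniqueness by monotonicity before computing the root. $R_{k,m}=\log_2(1+a\tilde\gamma_{k,m})$ is strictly increasing in $a$. The argument of $R_{k,i}$ is $(1-a)\tilde\gamma_{k,i}/(a\tilde\gamma_{k,i}+1)$, whose numerator decreases and denominator increases in $a$, so $R_{k,i}$ is strictly decreasing. At $a=0$ we have $R_{k,m}=0<R_{k,i}=\log_2(1+\tilde\gamma_{k,i})$, and at $a=1$ we have $R_{k,i}=0<R_{k,m}$. The intermediate value theorem together with strict monotonicity then gives exactly one crossing, and it lies in $(0,1)$.

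To obtain the closed form, I cross-multiply $(1-a)\tilde\gamma_{k,i}=a\tilde\gamma_{k,m}(a\tilde\gamma_{k,i}+1)$. This gives
\begin{equation*}
\tilde\gamma_{k,m}\tilde\gamma_{k,i}\,a^{2}+\big(\tilde\gamma_{k,m}+\tilde\gamma_{k,i}\big)a-\tilde\gamma_{k,i}=0 .
\end{equation*}
The leading coefficient is positive and the constant term is negative, so the product of the roots is negative. Hence there is exactly one positive root, namely the one with the $+$ sign on the square root. This is precisely \eqref{eq:pac_root}, and the discriminant is positive, so the root is real.

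Finally, $a\in(0,1)$ can be confirmed directly: the quadratic equals $-\tilde\gamma_{k,i}<0$ at $a=0$ and $\tilde\gamma_{k,m}\tilde\gamma_{k,i}+\tilde\gamma_{k,m}>0$ at $a=1$. This is consistent with the monotonicity argument, and $\alpha^\star_{k,i}=1-\alpha^\star_{k,m}$ then follows from the constraint $\sum_{i}\alpha_{k,i}=1$.

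The only delicate point is the independence of $\tilde\gamma$ and $\pi_k$ from $\boldsymbol{\alpha}$; without it the equation would be implicit in $a$. That independence is already supplied by the remark after \eqref{eq:gamma_tilde}, so everything else is routine.
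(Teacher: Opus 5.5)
Your proof is correct and takes essentially the same route as the paper. Like the paper, you equate \eqref{eq:rate_i} and \eqref{eq:rate_m}, reduce to the quadratic $\tilde\gamma_{k,m}\tilde\gamma_{k,i}a^{2}+(\tilde\gamma_{k,m}+\tilde\gamma_{k,i})a-\tilde\gamma_{k,i}=0$, and combine the sign change between $a=0$ and $a=1$ with monotonicity to get a unique root in $(0,1)$. The paper gets monotonicity from $f'(a)>0$ for $a\ge0$ rather than from the two rates, and your product-of-roots remark makes explicit why the $+$ root is the one selected, which the paper leaves implicit.
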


\emph{Proof:} Equating \eqref{eq:rate_i} and \eqref{eq:rate_m} and using
$\alpha_{k,i}=1-\alpha_{k,m}$ gives
\[
\alpha_{k,m}^{2}\tilde{\gamma}_{k,m}\tilde{\gamma}_{k,i}
+\alpha_{k,m}\big(\tilde{\gamma}_{k,m}+\tilde{\gamma}_{k,i}\big)
-\tilde{\gamma}_{k,i}=0.
\]
Denoting its left-hand side by $f(\alpha_{k,m})$, we have
$f(0)=-\tilde{\gamma}_{k,i}<0$,
$f(1)=\tilde{\gamma}_{k,m}\tilde{\gamma}_{k,i}+\tilde{\gamma}_{k,m}>0$, and
$f'(\alpha_{k,m})>0$ for $\alpha_{k,m}\ge0$, so the root in $(0,1)$ is
unique.
$\hfill\blacksquare$

Numerically, \eqref{eq:pac_root} is evaluated with its numerator
rationalised, which avoids cancellation when
$\tilde{\gamma}_{k,m}\tilde{\gamma}_{k,i}^{2}$ is small.
Because $\tilde{\gamma}_{k,i}$ and $\tilde{\gamma}_{k,m}$ vary across channel
realisations, so does the split \eqref{eq:pac_root}. Taking the expectation of
the root itself rather than of its arguments, the average fair allocation of
the first-decoded user is defined as
\begin{equation}
\bar{\alpha}^{\star}_{k,i}
\triangleq
\mathbb{E}\!\left[
1-\alpha^{\star}_{k,m}
\big(\tilde{\gamma}_{k,m},\tilde{\gamma}_{k,i}\big)
\right].
\label{eq:alpha_expect}
\end{equation}

\subsection{Active Beamforming Optimisation}
\label{subsec:bf}

Following the auxiliary-variable and \gls{sca} reformulation commonly used
for \gls{ris}-assisted \gls{noma} beamforming~\cite{Mu_Exploiting},
each rate in \eqref{eq:P1obj} is expressed as a logarithm of one plus a
ratio. We split each \gls{sinr} into its numerator and denominator and
introduce separate positive variables: $\tilde A_{k,i}$ absorbs the
reciprocal desired power of \gls{noma} stream $(k,i)$ and
$\tilde B_{k,i}$ its interference-plus-noise total, with
$\tilde A^{c}_{k}$ and $\tilde B^{c}_{k}$ playing the same roles for the
backscatter stream of cluster $k$,
\begin{align}
\tfrac{1}{\tilde A_{k,i}}
&=\big|\mathbf{h}_{k,i}^{H}\mathbf{w}_k\big|^{2}\alpha_{k,i},
&
\tilde B_{k,i}
&=\big|\mathbf{h}_{k,i}^{H}\mathbf{w}_k\big|^{2}\bar\alpha_{k,i}
+\Omega_{k,i},
\label{eq:slack_noma}\\
\tfrac{1}{\tilde A^{c}_{k}}
&=\eta_k\big|\mathbf{h}^{c,H}_{k,i^{\star}_k}\mathbf{w}_k\big|^{2},
&
\tilde B^{c}_{k}&=\Omega^{c}_{k},
\label{eq:slack_bd}
\end{align}
with $\bar\alpha_{k,i}$ the residual intra-cluster power defined after
\eqref{eq:sinr_sic}. Hence
$\gamma_{k,i}=1/(\tilde A_{k,i}\tilde B_{k,i})$ and
$\gamma_{k,c}=1/(\tilde A^{c}_{k}\tilde B^{c}_{k})$.

Using the standard \gls{sdr} lifting~\cite{luo_sdr_2010},
we further define $\mathbf{W}_k=\mathbf{w}_k\mathbf{w}_k^{H}$,
$\mathbf{H}_{k,i}=\mathbf{h}_{k,i}\mathbf{h}_{k,i}^{H}$,
$\mathbf{H}^{c}_{k,i}=\mathbf{h}^{c}_{k,i}\mathbf{h}^{c,H}_{k,i}$ and
$\mathbf{H}^{sp}=\mathbf{h}^{sp}\mathbf{h}^{sp,H}$, where
$\mathbf{W}_k\succeq\mathbf{0}$ and $\mathrm{rank}(\mathbf{W}_k)=1$, so
that
$|\mathbf{h}_{k,i}^{H}\mathbf{w}_j|^{2}=\Tr(\mathbf{W}_j\mathbf{H}_{k,i})$,
$C^{p}_{k,i}$ is a constant of this subproblem, and the floor
\eqref{eq:floor} reads
$\Omega_{k,i}=\sum_{j\neq k}\Tr(\mathbf{W}_j\mathbf{H}_{k,i})
+\eta_k\sum_{j=1}^{K}\Tr(\mathbf{W}_j\mathbf{H}^{c}_{k,i})
+C^{p}_{k,i}+\sigma^{2}$, and likewise for $\Omega^{c}_{k}$ of
\eqref{eq:floor_bd}. For compactness, the optimisation variables of this
subproblem are collected in
\begin{equation}
\boldsymbol{\Psi}\triangleq\big\{\{\mathbf{W}_k\},\{\tilde{A}_{k,i}\},
\{\tilde{B}_{k,i}\},\{\tilde{A}^{c}_{k}\},\{\tilde{B}^{c}_{k}\},
\{\hat R_{k,i}\},\{\hat R_{k,c}\}\big\},
\label{eq:psi_active}
\end{equation}
where $\{\hat R_{k,i}\}$ and $\{\hat R_{k,c}\}$ are the auxiliary rate
variables introduced below. Relaxing \eqref{eq:slack_noma} and
\eqref{eq:slack_bd} to inequalities, which is tight at the optimum since the
objective is monotone in each slack, the active beamforming problem is
\begin{subequations}\label{eq:P3}
\begin{align}
(\mathrm{P3})\ \max_{\boldsymbol{\Psi}}\ \
& \sum_{k=1}^{K}\Big(\sum_{i=1}^{K_c}\nu_{k,i}\hat R_{k,i}
+\nu_{k,c}\hat R_{k,c}\Big)\\
\text{s.t.}\ \
& \log_2\!\Big(1+\tfrac{1}{\tilde A_{k,i}\tilde B_{k,i}}\Big)\ge\hat R_{k,i},
\label{eq:P3_c1}\\
& \log_2\!\Big(1+\tfrac{1}{\tilde A^{c}_{k}\tilde B^{c}_{k}}\Big)\ge\hat R_{k,c},
\label{eq:P3_c2}\\
& \tfrac{1}{\tilde A_{k,i}}\le\Tr(\mathbf{W}_k\mathbf{H}_{k,i})\,\alpha_{k,i},
\label{eq:P3_A}\\
& \tilde B_{k,i}\ge\Tr(\mathbf{W}_k\mathbf{H}_{k,i})\,\bar\alpha_{k,i}
+\Omega_{k,i},
\label{eq:P3_B}\\
& \tfrac{1}{\tilde A^{c}_{k}}\le
\eta_k\Tr(\mathbf{W}_k\mathbf{H}^{c}_{k,i^{\star}_k}),
\label{eq:P3_Ac}\\
& \tilde B^{c}_{k}\ge\Omega^{c}_{k},
\label{eq:P3_Bc}\\
& \sum_{k=1}^{K}\Tr(\mathbf{W}_k\mathbf{H}^{sp})\le I_{\mathrm{th}},
\label{eq:P3_ith}\\
& \sum_{k=1}^{K}\Tr(\mathbf{W}_k)\le P_{\max},
\label{eq:P3_pow}\\
& \hat R_{k,i}\ge R^{\min}_{k,i},\ \ \hat R_{k,c}\ge R^{\min}_{k,c},
\label{eq:P3_qos}\\
& \mathbf{W}_k\succeq\mathbf{0},\ \ \mathrm{rank}(\mathbf{W}_k)=1.
\label{eq:P3_rank}
\end{align}
\end{subequations}

\begin{figure*}[!t]
\normalsize
\begin{align}
\tilde R_{k,i} &=
\log_2\!\Big(1+\tfrac{1}{\tilde A^{(\tau_1)}_{k,i}\tilde B^{(\tau_1)}_{k,i}}\Big)
-\frac{\big(\tilde A_{k,i}-\tilde A^{(\tau_1)}_{k,i}\big)\log_2 e}
      {\tilde A^{(\tau_1)}_{k,i}\big(1+\tilde A^{(\tau_1)}_{k,i}\tilde B^{(\tau_1)}_{k,i}\big)}
-\frac{\big(\tilde B_{k,i}-\tilde B^{(\tau_1)}_{k,i}\big)\log_2 e}
      {\tilde B^{(\tau_1)}_{k,i}\big(1+\tilde A^{(\tau_1)}_{k,i}\tilde B^{(\tau_1)}_{k,i}\big)},
\label{eq:lb_noma}\\
\tilde R_{k,c} &=
\log_2\!\Big(1+\tfrac{1}{\tilde A^{c(\tau_1)}_{k}\tilde B^{c(\tau_1)}_{k}}\Big)
-\frac{\big(\tilde A^{c}_{k}-\tilde A^{c(\tau_1)}_{k}\big)\log_2 e}
      {\tilde A^{c(\tau_1)}_{k}\big(1+\tilde A^{c(\tau_1)}_{k}\tilde B^{c(\tau_1)}_{k}\big)}
-\frac{\big(\tilde B^{c}_{k}-\tilde B^{c(\tau_1)}_{k}\big)\log_2 e}
      {\tilde B^{c(\tau_1)}_{k}\big(1+\tilde A^{c(\tau_1)}_{k}\tilde B^{c(\tau_1)}_{k}\big)}.
\label{eq:lb_bd}
\end{align}
\hrulefill
\vspace*{4pt}
\end{figure*}

Constraints \eqref{eq:P3_c1} and \eqref{eq:P3_c2} are non-convex:
$\log_2(1+1/(AB))$ is jointly convex for $A,B>0$~\cite{Mu_Exploiting}, so
lower-bounding it yields a reverse-convex feasible set. A convex feasible set
is recovered by replacing each logarithm with its tangent plane at the current
iterate, which for a convex function is a global underestimator and therefore
preserves feasibility. Carrying out the expansion in $\tilde A$ and $\tilde B$
about the $\tau_1$-th iterates $\tilde A^{(\tau_1)}_{k,i}$,
$\tilde B^{(\tau_1)}_{k,i}$, $\tilde A^{c(\tau_1)}_{k}$ and
$\tilde B^{c(\tau_1)}_{k}$ gives the minorants $\tilde R_{k,i}$ and
$\tilde R_{k,c}$ shown at the top of the next page in \eqref{eq:lb_noma} and
\eqref{eq:lb_bd}. Relaxing the
rank-one constraint \eqref{eq:P3_rank}, the problem becomes the convex
\gls{sdp}
\begin{subequations}\label{eq:P4}
\begin{align}
(\mathrm{P4})\ \max_{\boldsymbol{\Psi}}\ \
& \sum_{k=1}^{K}\Big(\sum_{i=1}^{K_c}\nu_{k,i}\hat R_{k,i}
+\nu_{k,c}\hat R_{k,c}\Big)\\
\text{s.t.}\ \
& \tilde R_{k,i}\ge\hat R_{k,i},\ \ \tilde R_{k,c}\ge\hat R_{k,c},
\label{eq:P4_rate}\\
& \eqref{eq:P3_A}\text{--}\eqref{eq:P3_qos},\
\mathbf{W}_k\succeq\mathbf{0}.
\end{align}
\end{subequations}

\begin{theorem}
\label{thm:rank_one}
Problem $(\mathrm{P4})$ admits an optimal rank-one solution, i.e.,
\begin{equation}
\operatorname{rank}\!\left(\mathbf{W}_k^{\star}\right)=1,
\qquad \forall k.
\end{equation}
\end{theorem}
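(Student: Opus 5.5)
The plan is the Lagrangian/KKT argument of~\cite{xu_resource_allocation_irs}, applied to $(\mathrm{P4})$ for fixed auxiliary variables. With $\{\tilde A_{k,i}\},\{\tilde B_{k,i}\},\{\tilde A^{c}_{k}\},\{\tilde B^{c}_{k}\},\{\hat R\}$ held at their optimal values, the $\mathbf{W}_k$ enter only linearly through traces. The relevant constraints are \eqref{eq:P3_A}, \eqref{eq:P3_B}, \eqref{eq:P3_Ac}, \eqref{eq:P3_Bc}, \eqref{eq:P3_ith} and \eqref{eq:P3_pow}. The SDP $(\mathrm{P4})$ is convex, and I will assume Slater's condition holds: a strictly feasible point exists whenever the QoS targets are strictly achievable. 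Strong duality then holds, and I will write the Lagrangian with multipliers as follows:
\begin{itemize}
\item $\lambda_{k,i},\mu_{k,i}\ge0$ for \eqref{eq:P3_A} and \eqref{eq:P3_B};
\item $\lambda^{c}_{k},\mu^{c}_{k}\ge0$ for \eqref{eq:P3_Ac} and \eqref{eq:P3_Bc};
\item $\beta\ge0$ for \eqref{eq:P3_ith} and $\rho\ge0$ for \eqref{eq:P3_pow};
\item $\mathbf{Y}_k\succeq\mathbf{0}$ for $\mathbf{W}_k\succeq\mathbf{0}$.
\end{itemize}

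Stationarity in $\mathbf{W}_k$ gives
\begin{equation}
\mathbf{Y}_k=\rho\mathbf{I}_M+\beta\mathbf{H}^{sp}+\boldsymbol{\Delta}_k
-\sum_{i}\lambda_{k,i}\alpha_{k,i}\mathbf{H}_{k,i}
-\lambda^{c}_{k}\eta_k\mathbf{H}^{c}_{k,i^{\star}_k},
\end{equation}
where $\boldsymbol{\Delta}_k\succeq\mathbf{0}$ collects every term in which $\mathbf{W}_k$ appears as interference. These are the $\mu_{j,i}$-weighted $\mathbf{H}_{j,i}$ for $j\ne k$, the $\eta_j\mu_{j,i}\mathbf{H}^{c}_{j,i}$, the $\bar\alpha_{k,i}\mu_{k,i}\mathbf{H}_{k,i}$ and the $\mu^{c}_{j}$ terms. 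The subtracted part, which I will denote by $\boldsymbol{\Xi}_k$, is the sum of the desired-signal matrices. It is no longer rank one, because $K_c$ users of cluster $k$ plus the BD stream share the same $\mathbf{W}_k$. This is the main obstacle. The approach of~\cite{xu_resource_allocation_irs} uses $\mathbf{A}_k\triangleq\rho\mathbf{I}+\beta\mathbf{H}^{sp}+\boldsymbol{\Delta}_k$, which is positive definite whenever $\rho>0$. It then argues that $\mathrm{rank}(\mathbf{Y}_k)\ge\mathrm{rank}(\mathbf{A}_k)-\mathrm{rank}(\boldsymbol{\Xi}_k)$, and combines this with complementary slackness $\mathbf{Y}_k\mathbf{W}_k=\mathbf{0}$.

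To finish, I will handle the multi-rank $\boldsymbol{\Xi}_k$ in two steps. First, I will show $\rho>0$ at the optimum, or if $\rho=0$, $\beta\mathbf{H}^{sp}+\boldsymbol{\Delta}_k\succ\mathbf{0}$ generically. Otherwise, scaling $\mathbf{W}_k$ along a direction in the null space would strictly increase the desired traces without violating any constraint, which contradicts optimality of the tight slacks. Then $\mathbf{W}_k^{\star}=\mathbf{A}_k^{-1/2}\mathbf{U}_k\mathbf{U}_k^{H}\mathbf{A}_k^{-1/2}$, where $\mathbf{U}_k$ spans the eigenspace of $\mathbf{A}_k^{-1/2}\boldsymbol{\Xi}_k\mathbf{A}_k^{-1/2}$ belonging to eigenvalue one, since $\mathbf{Y}_k\succeq\mathbf{0}$ forces the maximal eigenvalue to be at most one.

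Second, if that eigenspace has dimension larger than one, which is a measure-zero event for continuously distributed Rician/Rayleigh channels, I will fall back on the rank-reduction/purification step. Any optimal $\mathbf{W}_k^{\star}$ can be replaced by $\mathbf{W}_k^{\star}-t\boldsymbol{\Phi}_k$ with $\boldsymbol{\Phi}_k$ chosen so that all trace constraints are preserved. A Pataki-type count shows this is possible while rank exceeds one: the number of linear constraints involving $\mathbf{W}_k$ that are active at optimum is small relative to $(\mathrm{rank})^2$. Alternatively, one can argue generically that the eigenvalue-one eigenspace is one-dimensional. Either route gives a rank-one optimal $\mathbf{W}_k^{\star}$ with the same objective value. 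I expect the delicate part to be justifying this eigenspace simplicity, or the purification count, given the shared beamformer.
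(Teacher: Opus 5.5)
\textbf{There is a genuine gap: neither of your two patches removes the obstacle you correctly identified, so the proposal does not reach rank one.}

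\emph{The KKT step only bounds the rank.} Because $\mathbf{W}_k$ carries all $K_c$ NOMA symbols of the cluster and also feeds the BD stream, $\boldsymbol{\Xi}_k$ has rank up to $K_c+1$. The KKT argument of~\cite{xu_resource_allocation_irs}, to which the paper's proof defers, needs a single rank-one desired-signal matrix per beamformer. Here it only gives that the columns of $\mathbf{A}_k^{1/2}\mathbf{W}_k^{\star}$ lie in the eigenvalue-one eigenspace of $\mathbf{A}_k^{-1/2}\boldsymbol{\Xi}_k\mathbf{A}_k^{-1/2}$, hence $\mathrm{rank}(\mathbf{W}_k^{\star})\le K_c+1$.

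\emph{The genericity claim is unsupported.} You assert that this eigenspace is one-dimensional for almost every channel. But the multipliers $\lambda_{k,i},\lambda^{c}_{k}$ are not independent of the channels: they are exactly the values that drive the top eigenvalue to one while several desired-signal constraints stay active. A repeated top eigenvalue at such a dual optimum is a stable, open-set phenomenon rather than a coincidence. This is why the SDR of multicast beamforming, where one matrix must serve several receivers, fails to be tight with positive probability under continuous fading.

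\emph{The preliminary step is also loose.} If $\rho=0$ and $\mathbf{x}\in\mathrm{null}(\mathbf{A}_k)$, dual feasibility $\mathbf{Y}_k\succeq\mathbf{0}$ already forces $\boldsymbol{\Xi}_k\mathbf{x}=\mathbf{0}$. So adding power along $\mathbf{x}$ raises no desired trace that carries a positive multiplier, and no contradiction arises.

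\emph{The purification fallback fails by counting.} Perturbing a rank-$r$ solution inside its range uses a Hermitian $r\times r$ matrix, i.e.\ $r^{2}$ real unknowns. Preserving $m$ linear functionals of $\mathbf{W}_k$ therefore only guarantees $r^{2}\le m$, and rank one requires $m\le 3$. Here $\mathbf{W}_k$ enters through:
\begin{itemize}
\item $\Tr(\mathbf{W}_k\mathbf{H}_{k,i})$ for the $K_c$ users of its own cluster;
\item the BD direction of its own cluster (every $\mathbf{H}^{c}_{k,i}$ is a multiple of one rank-one matrix);
\item the $K_c$ user channels and the BD direction of each other cluster, via $\Omega_{j,i}$ and $\Omega^{c}_{j}$;
\item $\mathbf{H}^{sp}$ and $\Tr(\mathbf{W}_k)$.
\end{itemize}
That is $K(K_c+1)+2$ functionals, i.e.\ $8$ for the simulated $K=K_c=2$, which guarantees only $r\le 2$.

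You cannot shrink $m$ by dropping inactive constraints. The slack constraints \eqref{eq:P3_A}--\eqref{eq:P3_Bc} are all tight at the optimum, since the objective is monotone in each slack. Even the three desired-signal functionals plus the three functionals toward the other cluster already exceed three.

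Proving the theorem as stated therefore needs an additional structural argument specific to the shared cluster beamformer. Otherwise the defensible conclusion is a rank bound, with a rank-one beamformer then extracted by Gaussian randomisation or by an SROCR-type constraint on $\mathbf{W}_k$, as the paper does for $\mathbf{V}$.
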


\emph{Proof:} The proof follows similar arguments to those
in~\cite{xu_resource_allocation_irs} and is omitted for brevity.

Problem $(\mathrm{P4})$ is a standard convex \gls{sdp} and is solved
efficiently by CVX~\cite{cvx}, with $\mathbf{w}_k^{\star}$
recovered from the principal eigenvector of $\mathbf{W}_k^{\star}$.
Algorithm~\ref{alg:sca} summarises the proposed \gls{sca}-based algorithm,
which terminates when the objective value of $(\mathrm{P4})$ changes by less
than the stopping tolerance $\zeta_1$ and solves each instance to interior-point
accuracy $\epsilon_1$. Feasible initial slacks are obtained by the
slack-minimisation search of~\cite{Lin_initial} and, in subsequent \gls{ao}
iterations, by warm-starting from the previous iterate, with
\eqref{eq:P3_ith} and \eqref{eq:P3_pow} enforced exactly so that every iterate
respects the transmit and incumbent-protection budgets.

\begin{algorithm}[t]
\caption{\gls{sca}-based algorithm for obtaining $\{\mathbf{W}_k^{\star}\}$}
\label{alg:sca}
\begin{algorithmic}[1]
\State \textbf{Input:} stopping tolerance $\zeta_1$, solver accuracy
$\epsilon_1$.
\State Initialise feasible $\{\tilde A^{(0)}_{k,i}\}$,
$\{\tilde B^{(0)}_{k,i}\}$, $\{\tilde A^{c(0)}_{k}\}$,
$\{\tilde B^{c(0)}_{k}\}$, warm-started from the previous \gls{ao}
iteration where available, and set $\tau_1=0$.
\Repeat
  \State Update the slacks and $\{\mathbf{W}_k^{(\tau_1+1)}\}$ by
  solving $(\mathrm{P4})$;
  \State $\tau_1=\tau_1+1$;
\Until{the objective value of $(\mathrm{P4})$ changes by less than $\zeta_1$.}
\State \textbf{Output:} $\{\mathbf{W}_k^{\star}\}$
\end{algorithmic}
\end{algorithm}

\subsection{Passive Beamforming Optimisation}
\label{subsec:passive}

Let $\mathbf{v}=[e^{-j\theta_1},\dots,e^{-j\theta_N}]^{T}$ and
$\mathbf{V}=\mathbf{v}\mathbf{v}^{H}$, with $\mathbf{V}\succeq\mathbf{0}$,
$[\mathbf{V}]_{n,n}=1$ and $\mathrm{rank}(\mathbf{V})=1$. Denoting
$\mathbf{a}_{k,i,j}=\mathrm{diag}\{\mathbf{g}_{k,i}^{H}\}\mathbf{H}\mathbf{w}_j$
and
$\mathbf{a}^{c}_{k,i,j}=f_{k,i}\mathrm{diag}\{\mathbf{g}_{k,b}^{H}\}
\mathbf{H}\mathbf{w}_j$, we have
$|\mathbf{h}_{k,i}^{H}\mathbf{w}_j|^{2}=\Tr(\mathbf{V}\mathbf{A}_{k,i,j})$
and
$|\mathbf{h}^{c,H}_{k,i}\mathbf{w}_j|^{2}
=\Tr(\mathbf{V}\mathbf{A}^{c}_{k,i,j})$, where
$\mathbf{A}_{k,i,j}=\mathbf{a}_{k,i,j}\mathbf{a}_{k,i,j}^{H}$ and
$\mathbf{A}^{c}_{k,i,j}=\mathbf{a}^{c}_{k,i,j}\mathbf{a}^{c,H}_{k,i,j}$,
so that
$\Omega_{k,i}=\sum_{j\neq k}\Tr(\mathbf{V}\mathbf{A}_{k,i,j})
+\eta_k\sum_{j=1}^{K}\Tr(\mathbf{V}\mathbf{A}^{c}_{k,i,j})
+C^{p}_{k,i}+\sigma^{2}$, and likewise for
$\Omega^{c}_{k}$.\footnote{Since the blocked direct links leave the two pure
cascades of \eqref{eq:cascaded} as the only
$\boldsymbol{\Theta}$-dependent quantities, the standard $N$-dimensional
lifting suffices and no augmented coordinate is required. The
interference-temperature limit is likewise absent from this subproblem: with
$\{\mathbf{w}_k\}$ fixed it does not depend on $\boldsymbol{\Theta}$ and is
already satisfied by the active beamforming step.}

Reusing the slacks $\{\tilde A_{k,i},\tilde B_{k,i},\tilde A^{c}_{k},
\tilde B^{c}_{k}\}$ of \eqref{eq:slack_noma} and \eqref{eq:slack_bd}, and
replacing the rate constraints by the lower bounds \eqref{eq:lb_noma} and
\eqref{eq:lb_bd} evaluated at the $\tau_2$-th iterate, the variables of the
passive subproblem are collected in $\boldsymbol{\Phi}$, which is
$\boldsymbol{\Psi}$ of \eqref{eq:psi_active} with the beamforming matrices
$\{\mathbf{W}_k\}$ replaced by the lifted reflection matrix $\mathbf{V}$. The
passive beamforming problem is
\begin{subequations}\label{eq:P5}
\begin{align}
(\mathrm{P5})\ \max_{\boldsymbol{\Phi}}\ \
& \sum_{k=1}^{K}\Big(\sum_{i=1}^{K_c}\nu_{k,i}\hat R_{k,i}
+\nu_{k,c}\hat R_{k,c}\Big)\\
\text{s.t.}\ \
& \tilde R_{k,i}\ge\hat R_{k,i},\ \ \tilde R_{k,c}\ge\hat R_{k,c},
\label{eq:P5_rate}\\
& \tfrac{1}{\tilde A_{k,i}}\le\Tr(\mathbf{V}\mathbf{A}_{k,i,k})\,\alpha_{k,i},
\label{eq:P5_A}\\
& \tilde B_{k,i}\ge\Tr(\mathbf{V}\mathbf{A}_{k,i,k})\,\bar\alpha_{k,i}
+\Omega_{k,i},
\label{eq:P5_B}\\
& \tfrac{1}{\tilde A^{c}_{k}}\le
\eta_k\Tr(\mathbf{V}\mathbf{A}^{c}_{k,i^{\star}_k,k}),
\label{eq:P5_Ac}\\
& \tilde B^{c}_{k}\ge\Omega^{c}_{k},
\label{eq:P5_Bc}\\
& \hat R_{k,i}\ge R^{\min}_{k,i},\ \ \hat R_{k,c}\ge R^{\min}_{k,c},
\label{eq:P5_qos}\\
& \mathbf{V}\succeq\mathbf{0},\ \ [\mathbf{V}]_{n,n}=1,\ \
\mathrm{rank}(\mathbf{V})=1.
\label{eq:P5_rank}
\end{align}
\end{subequations}
The rank constraint \eqref{eq:P5_rank} is the remaining obstacle. Rather than
discard it, we follow the sequential relaxation of~%
\cite{caoSequentialConstraintRelaxation2017} and require only that a fixed
fraction of the energy of $\mathbf{V}$ lie along the leading eigendirection of
the previous iterate,
\begin{equation}
\mathbf{e}_{\max}\big(\mathbf{V}^{(\tau_2)}\big)^{H}\mathbf{V}\,
\mathbf{e}_{\max}\big(\mathbf{V}^{(\tau_2)}\big)
\ge\kappa^{(\tau_2)}\Tr(\mathbf{V}),
\label{eq:srocr}
\end{equation}
which is linear in $\mathbf{V}$ and hence convex. Here
$\mathbf{e}_{\max}(\cdot)$ and $\lambda_{\max}(\cdot)$ return the eigenvector
and the eigenvalue of the largest mode, and $\kappa^{(\tau_2)}\in[0,1]$
measures how strictly rank one is imposed: at $\kappa^{(\tau_2)}=0$ the
constraint is vacuous and at $\kappa^{(\tau_2)}=1$ it is exact. The parameter
is driven upward from zero by
\begin{equation}
\kappa^{(\tau_2+1)}=\min\left(1,\
\frac{\lambda_{\max}\big(\mathbf{V}^{(\tau_2)}\big)}
{\Tr\big(\mathbf{V}^{(\tau_2)}\big)}+\delta^{(\tau_2)}\right),
\label{eq:kappa_update}
\end{equation}
so that each iterate raises the target by a step $\delta^{(\tau_2)}$ above the
ratio $\lambda_{\max}(\mathbf{V}^{(\tau_2)})/\Tr(\mathbf{V}^{(\tau_2)})$
already achieved, this ratio being unity exactly when $\mathbf{V}^{(\tau_2)}$
is rank one. Substituting \eqref{eq:srocr} for
\eqref{eq:P5_rank} yields the convex \gls{sdp}
\begin{subequations}\label{eq:P6}
\begin{align}
(\mathrm{P6})\ \max_{\boldsymbol{\Phi}}\ \
& \sum_{k=1}^{K}\Big(\sum_{i=1}^{K_c}\nu_{k,i}\hat R_{k,i}
+\nu_{k,c}\hat R_{k,c}\Big)\\
\text{s.t.}\ \
& \eqref{eq:P5_rate}\text{--}\eqref{eq:P5_qos},\ \eqref{eq:srocr},
\label{eq:P6_inherit}\\
& \mathbf{V}\succeq\mathbf{0},\ \ [\mathbf{V}]_{n,n}=1,
\label{eq:P6_psd}
\end{align}
\end{subequations}
which is solved efficiently by CVX~\cite{cvx} to accuracy $\epsilon_2$.
Algorithm~\ref{alg:srocr} details the resulting \gls{srocr} procedure, in
which the relaxation tightens towards the rank-one constraint as
$\kappa^{(\tau_2)}\to1$. The procedure stops once $\kappa^{(\tau_2)}$ is
within the rank tolerance $\varrho$ of unity and the objective value changes
by less than the stopping tolerance $\zeta_2$. Upon termination, $\mathbf{V}^{\star}$
satisfies the rank-one condition within the prescribed tolerance $\varrho$. The reflection vector
$\mathbf{v}^{\star}$ is then recovered from the dominant eigencomponent of
$\mathbf{V}^{\star}$, and the phase shifts are obtained as
$\theta_n=-\arg([\mathbf{v}^{\star}]_n)$.

\begin{algorithm}[t]
\caption{\gls{srocr} algorithm for obtaining $\mathbf{V}^{\star}$}
\label{alg:srocr}
\begin{algorithmic}[1]
\State \textbf{Input:} rank tolerance $\varrho$, stopping tolerance
$\zeta_2$, solver accuracy $\epsilon_2$.
\State Initialise feasible $\mathbf{V}^{(0)}$, warm-started from the previous
\gls{ao} iteration where available, step size $\delta^{(0)}$, relaxation
parameter $\kappa^{(0)}=0$ and index $\tau_2=0$.
\Repeat
  \State Solve $(\mathrm{P6})$ to obtain $\mathbf{V}$;
  \If{$(\mathrm{P6})$ is solvable}
    \State $\mathbf{V}^{(\tau_2+1)}=\mathbf{V}$;\ \
    $\delta^{(\tau_2+1)}=\delta^{(0)}$;
  \Else
    \State $\mathbf{V}^{(\tau_2+1)}=\mathbf{V}^{(\tau_2)}$;\ \
    $\delta^{(\tau_2+1)}=\delta^{(\tau_2)}/2$;
  \EndIf
  \State $\tau_2=\tau_2+1$; update $\kappa^{(\tau_2)}$
  via \eqref{eq:kappa_update};
\Until{$|1-\kappa^{(\tau_2)}|\le\varrho$ and the objective value of
$(\mathrm{P6})$ changes by less than $\zeta_2$.}
\State \textbf{Output:} $\mathbf{V}^{\star}$
\end{algorithmic}
\end{algorithm}

\subsection{Proposed Iterative Algorithm}
\label{subsec:ao}

Based on the above discussions, the details of the proposed algorithm for
solving $(\mathrm{P1})$ are given in Algorithm~\ref{alg:ao}, in which the
decoding order is re-evaluated at the end of every iteration from the
beamformers just obtained and the outer loop terminates once the relative
change in the weighted sum rate falls below the stopping tolerance $\zeta_0$.

\begin{algorithm}[t]
\caption{Proposed \gls{ao} algorithm}
\label{alg:ao}
\begin{algorithmic}[1]
\State Initialise $\{\mathbf{w}_k^{(0)}\}$ satisfying \eqref{eq:P3_ith} and
\eqref{eq:P3_pow}, $\boldsymbol{\Theta}^{(0)}$, and stopping tolerance
$\zeta_0$; set $\tau_0=0$.
\State Calculate $\{\tilde\gamma_{k,i}\}$ via \eqref{eq:gamma_tilde} and
obtain $\{\pi_k^{(0)}\}$ according to \eqref{eq:order_rule};
\State Calculate
$R^{w,(0)}_{\mathrm{sum}}\triangleq R^{w}_{\mathrm{sum}}
(\{\mathbf{w}_k^{(0)}\},\boldsymbol{\Theta}^{(0)},\pi_k^{(0)})$;
\Repeat
  \State Update $\{\mathbf{w}_k^{(\tau_0+1)}\}$ via Algorithm~\ref{alg:sca};
  \State Update $\boldsymbol{\Theta}^{(\tau_0+1)}$ via
  Algorithm~\ref{alg:srocr};
  \State Calculate $\{\tilde\gamma_{k,i}\}$ via \eqref{eq:gamma_tilde}
  and update $\{\pi_k^{(\tau_0+1)}\}$ according to \eqref{eq:order_rule};
  \State Calculate $R^{w,(\tau_0+1)}_{\mathrm{sum}}$; $\tau_0=\tau_0+1$;
\Until{$\big|R^{w,(\tau_0)}_{\mathrm{sum}}-R^{w,(\tau_0-1)}_{\mathrm{sum}}\big|
\big/R^{w,(\tau_0)}_{\mathrm{sum}}<\zeta_0$}
\State \textbf{Output:} $\{\mathbf{w}_k^{\star}\}$,
$\boldsymbol{\Theta}^{\star}$
\end{algorithmic}
\end{algorithm}

The Taylor minorants \eqref{eq:lb_noma} and \eqref{eq:lb_bd} are tight at
their expansion point, so the active beamforming update does not decrease the
objective. The \gls{srocr} and decoding-order updates are not guaranteed to
increase it, so the weighted sum rate need not be monotone in every
realisation; it remains bounded by \eqref{eq:P1pow} and \eqref{eq:P1ith}, and
Algorithm~\ref{alg:ao} terminates through $\zeta_0$. In simulation, the average
objective converges within about $12$ iterations, and the final value matches
the best value reached to within $0.01$~bps/Hz on average.

The complexity of Algorithm~\ref{alg:sca} for solving the active
beamforming problem $(\mathrm{P4})$ is
$\mathcal{O}_1\triangleq\mathcal{O}\big(\tau_1^{\max}
\max\{M,(2(L+K)+2)\}^{4}\sqrt{M}\log_2\frac{1}{\epsilon_1}\big)$, where
$\tau_1^{\max}$ is the number of iterations of Algorithm~\ref{alg:sca}
and $\epsilon_1$ the interior-point solution accuracy. The complexity of
Algorithm~\ref{alg:srocr} for solving the passive beamforming problem
$(\mathrm{P6})$ is $\mathcal{O}_2\triangleq\mathcal{O}\big(\tau_2^{\max}
\max\{N,(2(L+K)+N)\}^{4}\sqrt{N}\log_2\frac{1}{\epsilon_2}\big)$, where
$\tau_2^{\max}$ and $\epsilon_2$ are the corresponding iteration count and
solution accuracy, and the $N$ additional constraints are the unit-modulus
equalities $[\mathbf{V}]_{n,n}=1$. The
decoding-order update requires only the sorting of $K_c$ scalars per
cluster, at a cost of $\mathcal{O}(KK_c\log K_c)$, against the
$\mathcal{O}(K_c!)$ of an exhaustive search, and is negligible beside the
two \glspl{sdp}. As a result, the total complexity of Algorithm~\ref{alg:ao} is
$\mathcal{O}(\tau_0^{\max}(\mathcal{O}_1+\mathcal{O}_2))$, where
$\tau_0^{\max}$ is the number of iterations of Algorithm~\ref{alg:ao}.
\begin{table*}[t]
\caption{Simulation Parameters~\cite{zuo_joint_2023,gao_joint_2023}}
\label{tab:sim_parameters}
\centering
\footnotesize
\renewcommand{\arraystretch}{1.05}
\setlength{\tabcolsep}{4pt}
\begin{tabular}{@{}ll@{\hspace{2em}}ll@{}}
\toprule
\textbf{Parameter} & \textbf{Value} & \textbf{Parameter} & \textbf{Value} \\
\midrule
\multicolumn{2}{@{}l}{\textit{Node Coordinates (m)}}
  & \multicolumn{2}{l}{\textit{Channel Parameters}} \\
Base station (BS)        & $(-15,-10,4)$
  & Path-loss exponent, $\beta$              & $2.2$ \\
RIS                      & $(10,5,4)$
  & Reference path loss, $PL(d_0)$, $d_0=1$\,m & $-30$ dB \\
Primary transmitter (PT) & $(50,-10,4)$
  & Rician factor, $K_R$                     & $3$ dB \\
Primary receiver (PR)    & $(30,5,4)$
  & & \\
Cluster 1 centre         & $(14,12,0)$, $r=3$\,m
  & \multicolumn{2}{l}{\textit{System Parameters}} \\
Cluster 2 centre         & $(6,12,0)$, $r=3$\,m
  & BD reflection coefficient, $\eta$        & $0.7$ ($0.9$ in Fig.~\ref{fig:convergence}) \\
                         &
  & Noise power, $\sigma^2$                  & $-90$ dBm \\
\multicolumn{2}{@{}l}{\textit{Network Configuration}}
  & Primary transmit power, $P^{\mathrm{p}}$ & $50$ dBm \\
Clusters, $K$                     & $2$
  & Interference threshold, $I_{\mathrm{th}}$ & $-90$ dBm \\
Users per cluster, $K_c$          & $2$
  & Weights, $\nu_{k,i}$, $\nu_{k,c}$        & $1$~\cite{sun_weighted_2024} \\
Total users, $L=KK_c$             & $4$
  & & \\
NOMA minimum rate, $R_{k,i}^{\min}$ & $1$ bps/Hz
  & \multicolumn{2}{l}{\textit{Algorithm Parameters}} \\
BD minimum rate, $R_{k,c}^{\min}$ & $0.5$ bps/Hz
  & AO tolerance, $\zeta_0$                  & $10^{-3}$ \\
Power budget, $P_{\max}$          & $50$ dBm
  & Channel realisations per point           & $10^{3}$ \\
\bottomrule
\end{tabular}
\end{table*}

\section{Simulation Results}
\label{sec:results}

The simulation parameters are listed in Table \ref{tab:sim_parameters}. Unless
otherwise stated, the channel parameters follow~\cite{zuo_joint_2023,gao_joint_2023}.
All results are averaged over independent channel realisations,
with the BD reflection coefficient held at its default value
except where it is swept.

\subsection{Convergence}
\label{subsec:convergence}
\begin{figure}[t]
  \centering
  \includegraphics[width=\columnwidth]{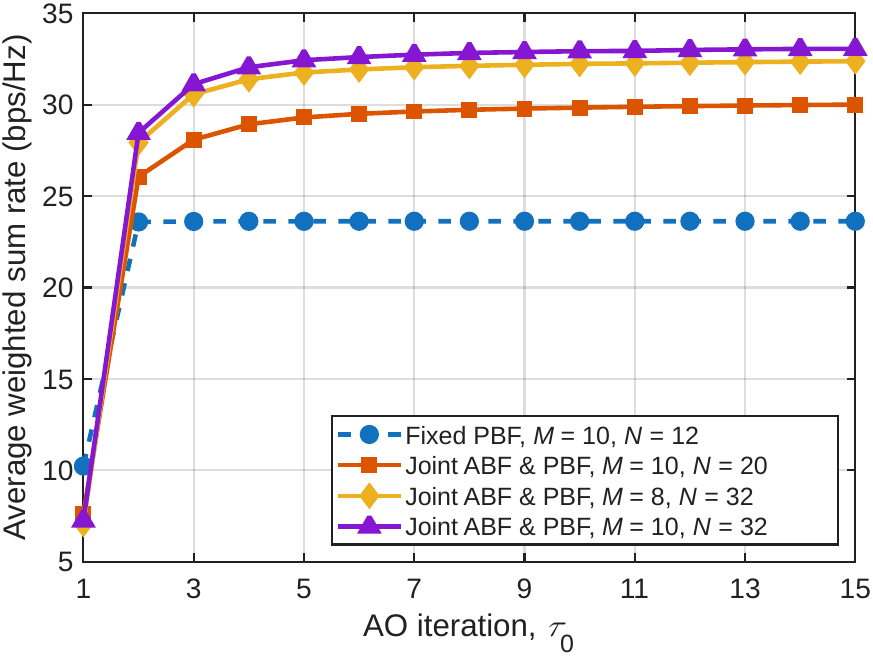}
\caption{Convergence of Algorithm~\ref{alg:ao} for three BS antenna and \gls{ris} element configurations, compared with a fixed passive beamforming (PBF) benchmark in which only the active beamforming (ABF) is optimised.}
  \label{fig:convergence}
\end{figure}
Fig.~\ref{fig:convergence} shows the average weighted sum rate versus the
outer AO iteration index $\tau_0$ for three antenna--RIS configurations,
together with a benchmark in which the RIS phases are kept fixed and only the
active beamformers are optimised. The average objective increases and the relative change in the weighted sum
rate falls below the stopping tolerance $\zeta_0$ within approximately $12$
AO iterations in all cases, confirming the convergence behaviour of
Algorithm~\ref{alg:ao}. The results also show that increasing either the
number of BS antennas or the number of RIS elements improves the achievable
weighted sum rate. In particular, increasing $N$ from $20$ to $32$ provides
a clear gain, while increasing $M$ from $8$ to $10$ for $N=32$ gives a
further improvement, since additional BS antennas provide greater active
beamforming gain and additional RIS elements enhance the passive
beamforming gain. The fixed-PBF benchmark settles after a single iteration
at about $23.6$~bps/Hz, below all jointly optimised configurations, which
indicates the gain from optimising the RIS phases. As the benchmark uses
$N=12$, this comparison is indicative rather than like-for-like.

\subsection{Power Allocation and User Fairness}
\label{subsec:pac_fairness}
We now validate Proposition~\ref{prop:crossover} and quantify the
fairness--throughput trade-off it implies. Throughout this subsection,
$\alpha_{k,i}$ denotes the power allocation of the non-SIC user, while
the complementary allocation of the SIC user is
$\alpha_{k,m}=1-\alpha_{k,i}$.

\begin{figure*}[t]
  \centering
   \includegraphics[width=0.99\textwidth]{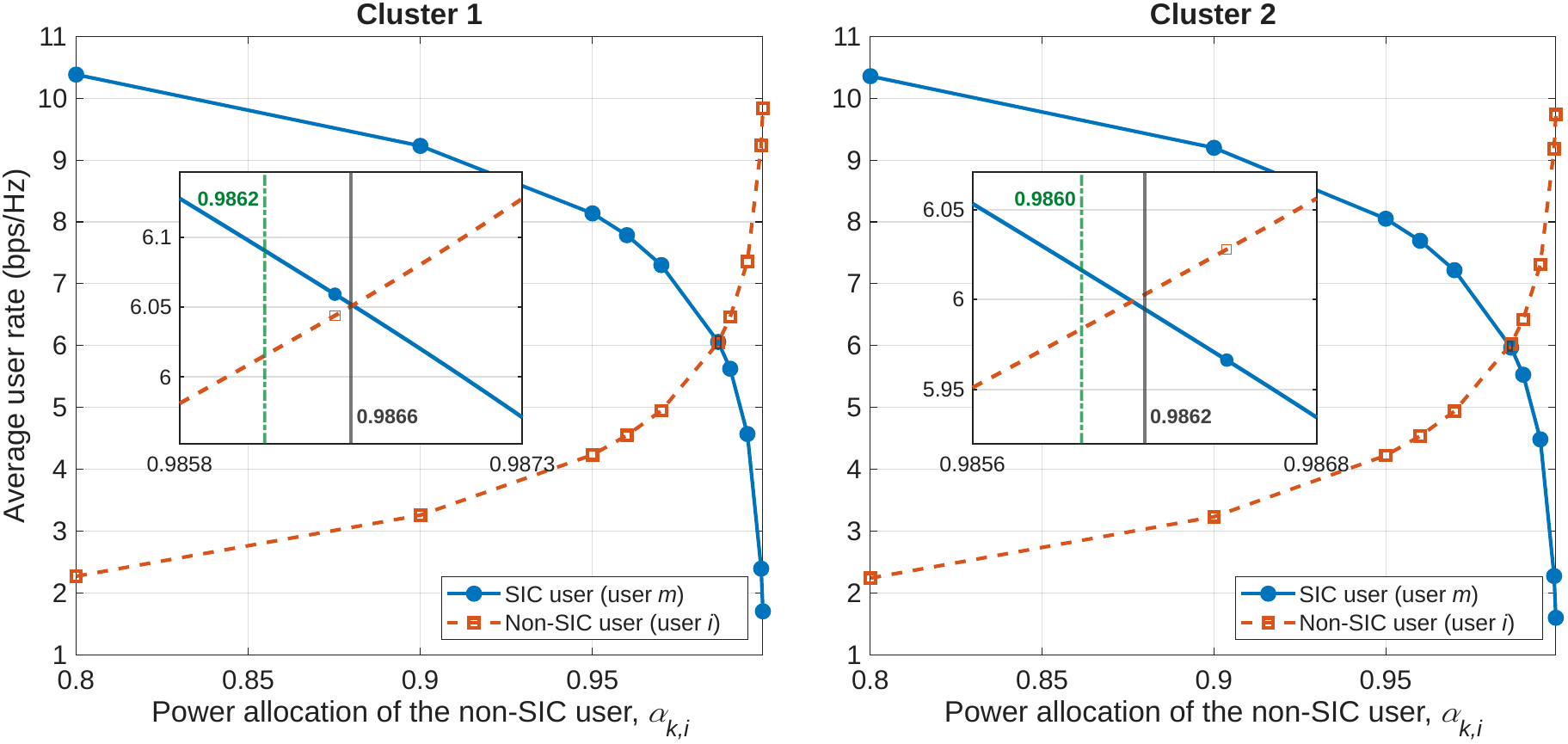}
  \caption{Average user rates against the power allocation of the non-SIC
  user, $\alpha_{k,i}$, with the crossover magnified in the insets. Solid vertical: crossing of the averaged rate
  curves; dash-dot: $\bar{\alpha}^{\star}_{k,i}$ from
  \eqref{eq:alpha_expect}. $M=10$, $N=12$.}
  \label{fig:crossover}
\end{figure*}

Fig.~\ref{fig:crossover} shows the user rates as $\alpha_{k,i}$ is swept over
$[0.8,0.9995]$, with $M=10$, $N=12$ and $1000$ channel realisations per point.
The SIC user has the higher rate over most of the interval, while the
two rates cross close to $\alpha_{k,i}=1$. The measured crossings are $0.9866$
and $0.9862$ for clusters $1$ and $2$, respectively. Evaluating
\eqref{eq:pac_root} in each channel realisation and averaging as in
\eqref{eq:alpha_expect} gives $\bar{\alpha}^{\star}_{k,i}=0.9862$ and $0.9860$,
which agree with the independently observed crossings to within
$4\times10^{-4}$. On average, equal user rates require approximately $98.6\,\%$ of the
cluster power to be allocated to the non-SIC user, leaving only about
$1.4\,\%$ for the SIC user.

Fig.~\ref{fig:equalrate} verifies the condition pointwise by applying
\eqref{eq:pac_root} independently in each channel realisation. The two user
rates are equalised to numerical precision, with mean rates of $6.14$ and
$6.10$~bps/Hz for clusters $1$ and $2$, respectively, and the corresponding
average allocations are consistent with the crossings of
Fig.~\ref{fig:crossover}.
\begin{figure}[t]
  \centering
  \includegraphics[width=\columnwidth]{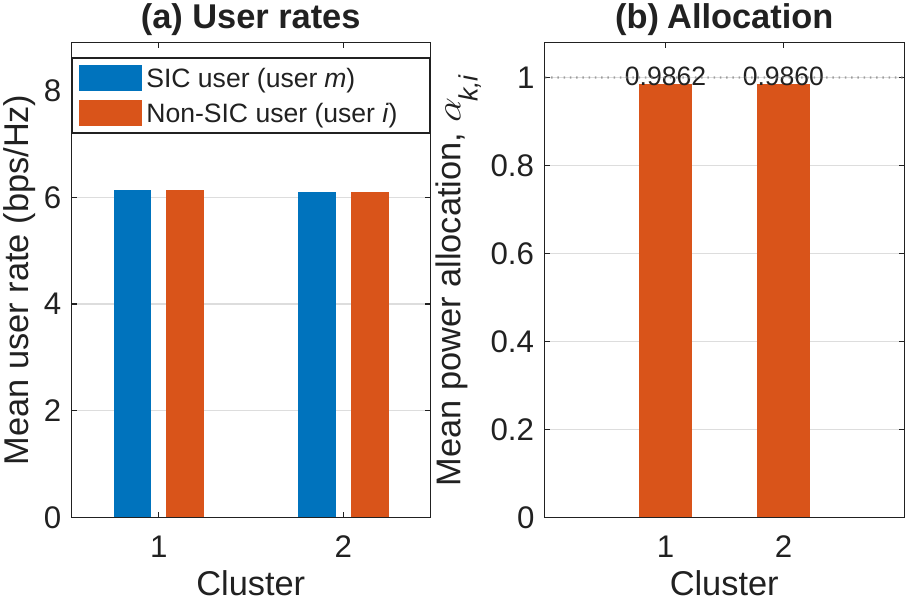}
  \caption{Equal-rate operating point obtained by applying
  \eqref{eq:pac_root} in each channel realisation: (a) mean user rates and
  (b) mean allocation $\bar{\alpha}^{\star}_{k,i}$. $M=10$, $N=12$.}
  \label{fig:equalrate}
\end{figure}

Fig.~\ref{fig:sumrate} illustrates the resulting trade-off. As $\alpha_{k,i}$
increases over the considered range, the total sum rate decreases from $26.57$
to $24.15$~bps/Hz, a $9.1\,\%$ reduction, because power is transferred away
from the SIC user, whose own signal is free of intra-cluster interference
after \gls{sic}. At the equal-rate operating point,
$\alpha_{k,i}\approx0.9865$, the total sum rate is $25.38$~bps/Hz, so rate
equality costs about $4.5\,\%$ relative to $\alpha_{k,i}=0.8$; the full
$9.1\,\%$ applies only when $\alpha_{k,i}$ is pushed beyond that point.
Fig.~\ref{fig:sumrate} also compares a no-\gls{bd} benchmark: the same network
and optimisation with the \glspl{bd} removed, i.e., without backscatter
interference, \gls{bd} rates or \gls{bd} minimum rates. At $\alpha_{k,i}=0.8$, supporting the \glspl{bd}
lowers the \gls{noma} sum rate from $27.65$ to $25.25$~bps/Hz
($8.7\,\%$); their own rate of $1.32$~bps/Hz leaves the total $3.9\,\%$
below the no-\gls{bd} system, which is the cost of the backscatter service.
\begin{figure}[t]
  \centering
  \includegraphics[width=\columnwidth,trim=0 19pt 0 20.5pt,clip]{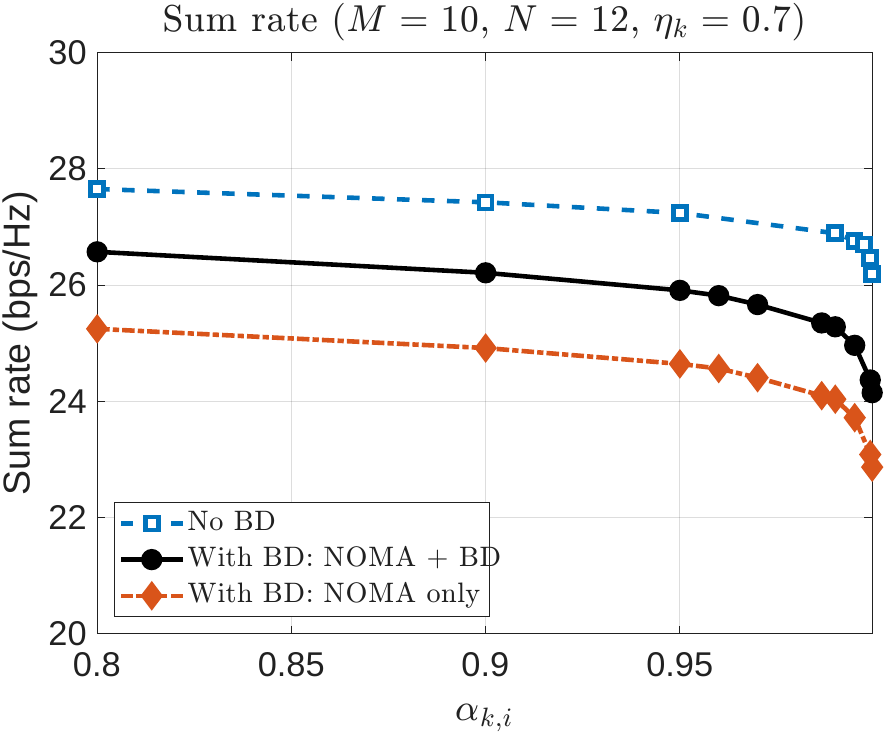}\\[-1pt]
  {\footnotesize Power allocation of the non-SIC user, $\alpha_{k,i}$}
  \caption{Sum rate against the power allocation of the non-SIC user,
  $\alpha_{k,i}$: total and \gls{noma}-only rates with \glspl{bd}, and a no-\gls{bd}
  benchmark (the same network without \glspl{bd}) optimised in the same way.
  $M=10$, $N=12$, $\eta_k=0.7$ for the curves with \glspl{bd}.}
  \label{fig:sumrate}
\end{figure}

\subsection{Effect of the Backscatter Reflection Coefficient}
\label{subsec:eta}

The reflection coefficient $\eta_k$ controls the fraction of incident power
re-radiated by \gls{bd}$_k$ and therefore affects both the desired
backscatter signal and the interference observed by the \gls{noma} users.
To characterise this interaction, $\eta_k$ is swept over
$[0.3,0.999]$ at three intra-cluster power allocations:
$\alpha_{k,m}=0.014$, $0.05$, and $0.20$, with $M=10$ and $N=12$.
The first allocation is close to the equal-rate operating point identified
in Section~\ref{subsec:pac_fairness}, whereas $\alpha_{k,m}=0.20$
represents a more sum-rate-oriented allocation.

\begin{figure}[t]
    \centering
    \includegraphics[width=\columnwidth]{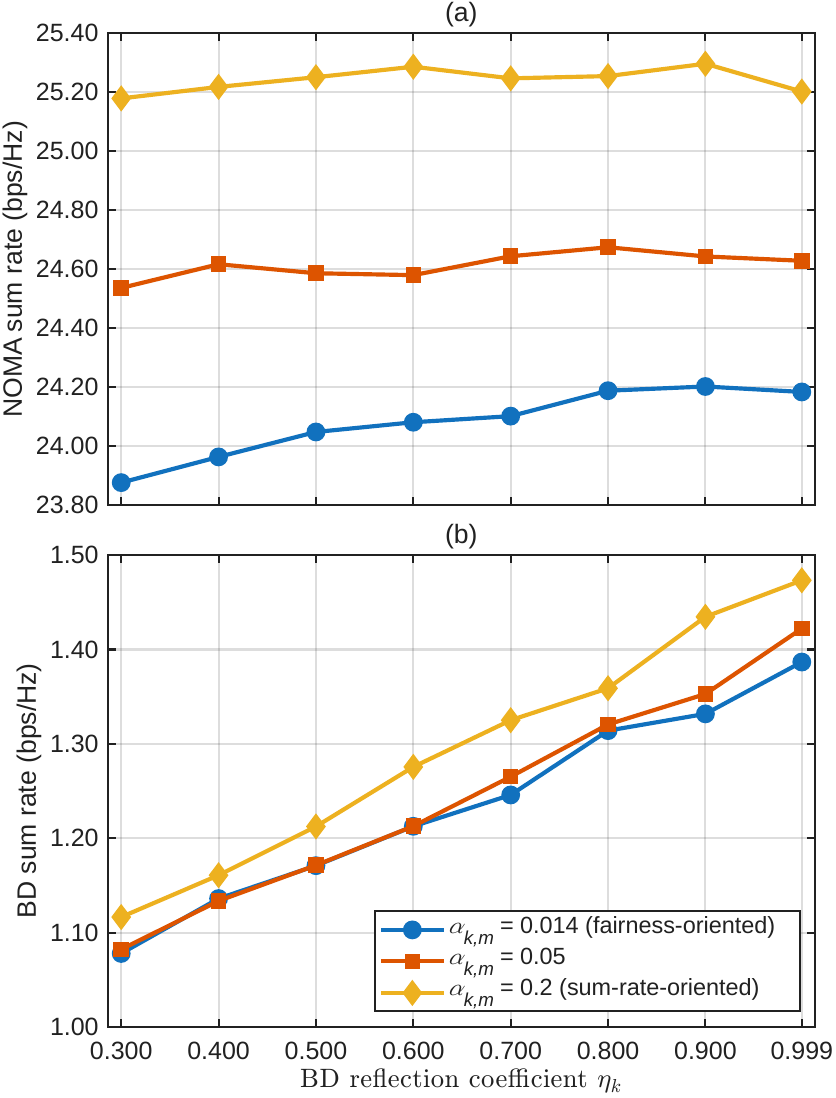}
    \caption{NOMA and BD sum rates against the BD reflection coefficient
    $\eta_k$ for three intra-cluster power allocations:
    (a) NOMA sum rate and (b) BD sum rate. $M=10$, $N=12$.}
    \label{fig:eta}
\end{figure}

As shown in Fig.~\ref{fig:eta}, increasing $\eta_k$ produces a pronounced
improvement in the BD sum rate at all three power allocations, whereas the
aggregate \gls{noma} rate varies only slightly. From $\eta_k=0.3$ to
$0.999$, the BD sum rate increases from $1.078$ to $1.387$~bps/Hz at
$\alpha_{k,m}=0.014$, from $1.082$ to $1.422$~bps/Hz at
$\alpha_{k,m}=0.05$, and from $1.117$ to $1.473$~bps/Hz at
$\alpha_{k,m}=0.20$, corresponding to increases of approximately
$28.6\,\%$, $31.4\,\%$, and $32.0\,\%$, respectively.

In contrast, the corresponding endpoint changes in the \gls{noma} sum rate
are only $1.29\,\%$, $0.38\,\%$, and $0.09\,\%$, respectively. The reflection coefficient therefore strengthens the backscatter
service with only a limited effect on the aggregate \gls{noma} performance,
which suggests a two-stage parameter-selection guideline: $\eta_k$ is first
selected according to the required backscatter-service level. Once $\eta_k$ is fixed, the intra-cluster power allocation can be selected
according to the desired operating objective. For throughput-oriented operation, relatively more power can be assigned
to the stronger SIC user $m$, which decodes its own signal without
intra-cluster interference after \gls{sic}. If user fairness is instead
prioritised, Proposition~\ref{prop:crossover} provides the power split that
equalises the SIC and non-SIC user rates. Thus, $\eta_k$ primarily controls
the backscatter-service level, whereas the power allocation coefficients
determine the operating point between aggregate throughput and
intra-cluster user fairness.

\section{Conclusion}
\label{sec:conclusion}

This paper investigated joint active and passive beamforming for an
RIS-assisted backscatter NOMA network operating as the secondary tier of an
underlay cognitive radio system. An interference-normalised decoding order
was adopted to determine the NOMA decoding sequence, allowing the explicit
SIC constraints to be removed from the optimisation problem. The resulting
non-convex weighted-sum-rate maximisation problem was solved using an
alternating optimisation framework combining SDR-based active beamforming
and SROCR-based passive beamforming. Numerical results showed that the
average weighted sum rate converges within approximately $12$ outer AO
iterations, outperforming a fixed-PBF benchmark, and confirmed that
increasing the numbers of BS antennas and RIS elements improves the achievable
weighted sum rate. For the two-user case, a closed-form power allocation was derived
for equal user rates, and its ensemble-average prediction closely
matched the independently observed rate crossover. The results further revealed a fairness--throughput trade-off: operating at
the equal-rate split reduced the aggregate sum rate by about $4.5\,\%$, and
supporting the \glspl{bd} reduced the \gls{noma} sum rate by $8.7\,\%$ relative
to a no-\gls{bd} design.
The reflection-coefficient study showed that increasing $\eta_k$ improves
the backscatter sum rate substantially while the aggregate \gls{noma} rate
remains comparatively insensitive. Taken together, these results show that
the \gls{bd} reflection coefficient and the intra-cluster power allocation
provide complementary controls for balancing backscatter performance,
aggregate throughput, and user fairness. Future work will compare the
proposed design with benchmark schemes, including \gls{oma}, maximum-ratio and
zero-forcing beamforming, and the no-\gls{cr} special case,
and will examine very small and very large $N$, the Rician factor $K_R$, and
robust designs under partial or statistical \gls{csi}.

\section*{Acknowledgment}
The authors acknowledge the Centre for High Performance Computing (CHPC),
South Africa, for providing computational resources on the Lengau cluster.

\balance

\bibliographystyle{IEEEtran}
\bibliography{references}

\end{document}